\documentclass[letterpaper, 10 pt, conference]{ieeeconf}  

\IEEEoverridecommandlockouts                              

\usepackage{amsmath,amsthm,amssymb,amsfonts}
\usepackage{xcolor}
\usepackage{graphicx}
\usepackage{subcaption}
\usepackage[american]{circuitikz}
\usepackage[thinc]{esdiff}
\usepackage{tabularx}
\usepackage{multirow}
\usepackage{booktabs}
\usepackage[UKenglish]{babel}
\usepackage[lined]{algorithm2e}
\newtheorem{remark}{Remark}[section]
\newtheorem{lemma}{Lemma}[section]
\newtheorem{defn}{Definition}[section]
\newtheorem{thm}{Theorem}[section]
\newtheorem{prop}{Proposition}[section]
\newtheorem{asu}{Assumption}[section]

\title{\vspace{-13mm}\LARGE \bf
	Moment Matching for Descriptor Systems:\\ A M\" obius Mapping Approach\\{\color{white}(The Sandsnakes' Version)}\vspace{-5mm}
}

\author{David Militaru$^{1}$, Denisa Jungheatu$^{1,*}$, Andrei Speril\u a$^{1}$, and Tudor C. Ionescu$^{1,2}$\vspace{-10mm}
	\thanks{$^{1}$The authors are with the Department of Automatic Control and Systems Engineering, ``Politehnica'' University of Bucharest, Bucharest, Romania
		{\tt\footnotesize \{denisa.jungheatu,david.militaru\}@stud.acs.upb.ro, \{andrei.sperila,tudor.ionescu\}@upb.ro}}%
    \thanks{$^{2}$Tudor C. Ionescu is also a member of the ``Gheorghe Mihoc-Caius Iacob'' Institute of Mathematical Statistics and Applied Mathematics of the Romanian Academy, Bucharest, Romania}
	\thanks{$^{*}$Denisa Jungheatu is the corresponding author. Please address all electronic correspondence to {\tt\footnotesize denisa.jungheatu@stud.acs.upb.ro}.}%
}

\begin{document}

	\maketitle
	\thispagestyle{empty}
	\pagestyle{empty}

	\begin{abstract}
		
    For a class of single-input single-output systems described by proper or improper transfer functions, we propose moment-matching procedures applicable in both continuous- and discrete-time contexts. The resulting technique is not only more flexible and reliable than other procedures which are currently available in literature, but it also enables the placement of constraints on the reduced-order model's poles and zeros. This constraint-based feature, hitherto available only for continuous-time state-space systems, is illustrated via a numerical example based on a practical problem from literature.\vspace{-1mm}
		
	\end{abstract}

	\section{Introduction}\label{sec:intro}\vspace{-1mm}
	
	\subsection{Context}\label{subsec:context}
	
	The moment-matching framework, originally proposed in \cite{ORIG_MM} for both linear and nonlinear systems, is one of the most popular approaches to model reduction. This framework has drawn considerable interest over the years, with numerous subsidiary techniques having been developed (primarily in the continuous-time setting) with the aim of matching a system's steady-state response (in both the linear \cite{TS_MM} and the nonlinear \cite{NL_MM} case), or of optimising an $\mathcal{H}_2$-norm-based approximation error \cite{H2_MM}. Recently, however, this class of techniques has received renewed attention due to their affinity towards data-driven solutions (see, for example, \cite{IO_DD_MM} and \cite{KBL_MM}), which has also spurred interest in constructing discrete-time analogues of the aforementioned procedures \cite{DISC_MM}.
	
	Some of the modern technical applications that benefit the most from model approximation are large-scale networked systems, which oftentimes exhibit singular dynamics (see \cite{DAE_surv} for a comprehensive discussion). To conveniently tackle the latter, a combination of centred realisations \cite{css_orig} and M\" obius mappings (see Chapter~10 in \cite{complexvar}) was shown in \cite{H2_ct} and in \cite{H2_ds} to elegantly bypass all of these difficulties, by allowing a direct analogue of classical theory to be employed in both continuous- and discrete-time cases. This type of approach, however, seems to be generally under-represented in the field of model approximation, with recent publications grappling directly with a network's original, singular dynamics \cite{DAE_MM}. \vspace{-2mm}
	
	\subsection{Motivation}\label{subsec:motiv}
	
	Although the technique presented in \cite{DAE_MM} enables moment matching for singular systems, we point out the fact that the procedure is applicable only in the continuous-time setting, and that it does not permit the placement of the approximated model's poles or zeros. Additionally, this approach employs the so-called descriptor realisations, which exhibit a series of computational drawbacks compared to other representations that will be discussed in the sequel. The same limitation regarding pole-zero placement also applies to \cite{DISC_MM}, as well as the overall inability of the procedure to handle both continuous- and discrete-time systems via a unified framework.\vspace{-2mm}
	
	\subsection{Contribution}\label{subsec:contrib}
	
	The current work hereby proposes a moment-matching-based approximation technique, inspired chiefly by the results presented in \cite{PZ_MM}, and applicable to both continuous- and discrete-time systems. For single-input single-output plants with an associated transfer function that may be improper, our proposed model-reduction procedure allows us to select the poles and zeros of the approximated system. Furthermore, due to the use of specialised representations, our moment matching routines are guaranteed to be more flexible and reliable than the ones in \cite{DAE_MM} whenever the system's associated transfer function is improper and, therefore, said transfer function cannot be represented via state-space systems (for which classical moment matching theory can be applied).\vspace{-2mm}
	
	\subsection{Paper Structure}\label{subsec:struc}
	
	The rest of the manuscript is organised as follows: Section~\ref{sec:prelim} presents a number of preliminary notions and auxiliary results. Section~\ref{sec:main} holds our problem statement and our main results, which are then illustrated on a practical problem in Section~\ref{sec:num_ex}. Concluding remarks are presented in Section~\ref{sec:outro}.\vspace{-1mm}
	
	\section{Preliminaries}\label{sec:prelim}\vspace{-1mm}
	
	\subsection{Notation}\label{subsec:not}
	
	Let $\mathbb{N}$, $\mathbb{R}$ and $\mathbb{C}$ be the set of all natural, real, and complex numbers. The one-point compactification of $\mathbb{C}$ (Riemann's Sphere) will be denoted $\overline{\mathbb{C}}:=\mathbb{C}\cup\{\infty\}$. The set of $p\times m$ matrices with entries in a set $\mathbb{M}$ is denoted $\mathbb{M}^{p\times m}$, and $\mathbb{M}^p$ is the set of vectors of length $p$ having entries in $\mathbb{M}$. The term $\mathrm{Jord}(\mu_1,n_1,\dots,\mu_q,n_q)$ denotes a matrix in Jordan canonical form, in which the Jordan block associated with an eigenvalue $\mu_i\in\mathbb{C}$ is of size $n_i\times n_i,\,\forall\,i\in\{1:q\}$. For any $Q\in\mathbb{C}^{p\times m}$, the matrices $\mathrm{Re}(Q)$ and $\mathrm{Im}(Q)$ are formed from the real and imaginary parts of the entries belonging to $Q$, whereas the term $\|Q\|_*$ denotes the \emph{nuclear norm} of $Q$, which represents the sum of all its singular values.
	
	For any two matrices $M, N\in\mathbb{C}^{p\times m}$, the first-order matrix polynomial $M-\lambda N$ of complex indeterminate $\lambda$ is called a \emph{matrix pencil}, and we denote its set of \emph{generalised eigenvalues} (see \cite{gantmacher}) as $\Lambda(M-\lambda N)$. For any matrix $M$, we denote by $M^\top$ its transpose and, if $M=M^\top$ is real-valued, then $M\succ O$ will be used to denote the fact that $M$ is positive definite. Additionally, all real-rational functions of complex indeterminate $\lambda$ (which stands for either the Laplace-transform variable $s$ in continuous time, or the $\mathcal{Z}$-transform variable $z$ in discrete time) will be termed \emph{transfer functions} and will be denoted via boldface symbols.\vspace{-1mm}
	
	\subsection{Descriptor Systems and Centred Realisations}\label{subsec:dss}
	
	Our focus in this manuscript is on systems whose input-output dynamics are described in the frequency domain by\vspace{-2mm}
	\begin{equation}\label{eq:tf}
		\mathbf{G}(\lambda)=\frac{b_{\mathbf{G}}(\lambda)}{a_{\mathbf{G}}(\lambda)}=\frac{\sum_{j=0}^{n_b}b_j\lambda^j}{\sum_{i=0}^{n_a}a_i\lambda^i}\,,\vspace{-1mm}
	\end{equation}
	\emph{transfer functions} with $a_i,b_j\in\mathbb{R},\,\forall\,i\hspace{-0.5mm}\in\hspace{-0.5mm}\{0\hspace{-0.5mm}:\hspace{-0.5mm}n_a\},\ j\hspace{-0.5mm}\in\hspace{-0.5mm}\{0\hspace{-0.5mm}:\hspace{-0.5mm}n_b\}$, $a_{n_a}b_{n_b}\neq0$, and $n_a\in\mathbb{N}$ \emph{possibly smaller} than $n_b\in\mathbb{N}$.\vspace{-1mm}
	
	\begin{defn}
		Let $\mathbf{G}(\lambda)$ be expressed as in \eqref{eq:tf}, in which the pair of polynomials $(a_{\mathbf{G}}(\lambda),b_{\mathbf{G}}(\lambda))$ is coprime. The \emph{McMillan degree} of $\mathbf{G}(\lambda)$ is defined as $\delta(\mathbf{G}):=\max\{n_{a},n_b\}$ (see also Section~3.11 of \cite{zhou} for an alternative interpretation), whereas the roots of the coprime polynomials $a_{\mathbf{G}}(\lambda)$ and $b_{\mathbf{G}}(\lambda)$ are, respectively, the \emph{finite} poles and zeros of $\mathbf{G}(\lambda)$.\vspace{-1mm}
	\end{defn}
	
	\begin{defn}\label{def:inf_pz}
		The \emph{infinite} poles and zeros of $\mathbf{G}(\lambda)$ are defined as the finite poles and zeros of $\mathbf{G}\left(\frac{1}{\lambda}\right)$ at $\lambda=0$.\vspace{-1mm}
	\end{defn}
	
	Given a point $\lambda_0\in\overline{\mathbb{C}}$ and a real-valued matrix quintuplet, we denote by $(A,B,C,D,E)_{\lambda_0}$ a so-called \emph{realisation centred at} $\lambda_0$ of $\mathbf{G}(\lambda)$. When $\lambda_0\in\mathbb{C}$, the realisation satisfies\vspace{-1mm}
	\begin{equation}\label{eq:css}
		\hspace{-3mm}\footnotesize\begin{array}{l}
		      \mathbf{G}(\lambda)=C(\lambda E-A)^{-1}B(\lambda_0-\lambda)+D=:\hspace{-1mm}\left[\begin{array}{c|c}
		    A-\lambda E & B \\\hline
		      C & D
		\end{array}\right]_{\lambda_0}\hspace{-1mm},
		\end{array}\vspace{-2mm}\hspace{-4mm}\normalsize
	\end{equation}
    and, when $\{\lambda_0\}=\{\infty\}$, we denote\vspace{-2mm}
    \begin{equation*}
		\hspace{-3mm}\footnotesize\begin{array}{l}
		      \mathbf{G}(\lambda)=C(\lambda E-A)^{-1}B+D=:\hspace{-1mm}\left[\begin{array}{c|c}
		    A-\lambda E & B \\\hline
		      C & D
		\end{array}\right].
		\end{array}\vspace{-1mm}\normalsize
	\end{equation*}
    Note that, by dropping the $\lambda_0$ index, we get $(A,B,C,D,E)$, which is precisely a so-called descriptor realisation of the type used in \cite{DAE_MM}. Moreover, a descriptor realisation having $E=I_n$ is nothing more than a standard \emph{state-space system}, with the employed notation further reducing to $(A,B,C,D)$. In all of these cases, we have $A,E\in\mathbb{R}^{n\times n}$, $B,C^\top\in\mathbb{R}^n$, and $D\in\mathbb{R}$, with $n\in\mathbb{N}$ being called the realisation's \emph{order} and the matrix polynomial $A-\lambda E$, which implicitly satisfies $\det(A-\lambda E)\not\equiv 0$, being called the realisation's \emph{pole pencil}.\vspace{-1mm}

    \begin{remark}
		It is precisely due to the fact that centred realisations generalise the notion of descriptor realisation that we can derive the many theoretical and computational benefits presented in the sequel. Note also that while our representations are time-domain-independent, those used in \cite{DAE_MM} are tailored exclusively to the continuous-time case.\vspace{-1mm}
	\end{remark}

    \noindent We now introduce a crucial property of centred realisations.\vspace{-1mm}
	
	\begin{defn}
		A centred realisation is called \emph{minimal} if its order is no greater than that of any other centred realisation of the same transfer function, regardless of their respective centring points.\vspace{-1mm}
	\end{defn}

	The next result states a succinct condition for minimality.\vspace{-1mm}
	
	\begin{prop}\label{prop:min_css}
		For any $\mathbf{G}(\lambda)$, a centred realisation is minimal if and only if its order is equal to $\delta(\mathbf{G})$.\vspace{-2mm}
	\end{prop}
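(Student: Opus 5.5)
We will prove that the minimal order over \emph{all} centred realisations, regardless of centring point, is exactly $\delta(\mathbf{G})$. The statement then follows at once: a realisation is minimal if and only if its order attains this minimum. Two facts are needed. First, every centred realisation of $\mathbf{G}(\lambda)$ has order $n\geq\delta(\mathbf{G})$ (a lower bound). Second, some centred realisation has order exactly $\delta(\mathbf{G})$ (attainability). The tool for both is a M\"obius change of variable that turns a centred realisation into an ordinary state-space realisation, where classical minimality theory applies.

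For the lower bound, fix a realisation $(A,B,C,D,E)_{\lambda_0}$ of order $n$. Suppose first that $\lambda_0\in\mathbb{C}$. Choose a point $\mu\notin\Lambda(A-\lambda E)$ with $\mu\neq\lambda_0$; real choices are available, so everything stays real. Apply the M\"obius map $\lambda=\varphi(w)$ that sends $w=\infty$ to $\lambda=\mu$, for instance $\lambda=\mu+1/w$. Then $\lambda E-A=(\mu E-A)+\tfrac{1}{w}E$, and $\mu E-A$ is invertible. Write $\lambda_0-\lambda=(\lambda_0-\mu)-\tfrac1w$. A direct computation then gives $\mathbf{G}(\varphi(w))=\widetilde C(wI-\widetilde A)^{-1}\widetilde B+\widetilde D$, a standard state-space realisation of order $n$ with $\widetilde A=-(\mu E-A)^{-1}E$ and suitable $\widetilde B,\widetilde C,\widetilde D$. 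The algebra mirrors the proof that a centred realisation produces a proper function once the centre is moved off the poles.

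The key invariance is that McMillan degree is preserved under M\"obius maps. Substitute $\lambda=(\alpha w+\beta)/(\gamma w+\delta)$ into coprime $b_{\mathbf{G}}/a_{\mathbf{G}}$ and clear denominators by $(\gamma w+\delta)^{\max\{n_a,n_b\}}$. This yields coprime polynomials whose maximal degree is again $\max\{n_a,n_b\}$, because common roots would pull back to common roots under the bijection of $\overline{\mathbb{C}}$. Classical theory (e.g.\ Section~3.11 of \cite{zhou}) says that a proper function has state-space realisations only of order at least its McMillan degree. Hence $n\geq\delta(\mathbf{G}\circ\varphi)=\delta(\mathbf{G})$. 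The case $\lambda_0=\infty$ is an ordinary descriptor realisation and is handled the same way with the same $\mu$.

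For attainability, pick a real $\mu$ that is not a pole of $\mathbf{G}$ and set $\mathbf{H}(w)=\mathbf{G}(\mu+1/w)$. This function is proper, since $w=\infty$ corresponds to $\lambda=\mu$, where $\mathbf{G}$ is finite. By the invariance above, $\delta(\mathbf{H})=\delta(\mathbf{G})$. Take a minimal state-space realisation $(\widetilde A,\widetilde B,\widetilde C,\widetilde D)$ of $\mathbf{H}$, of order $\delta(\mathbf{G})$. Reverse the substitution with $w=1/(\lambda-\mu)$, which gives
\[
(wI-\widetilde A)^{-1}=(\lambda-\mu)\bigl(I-(\lambda-\mu)\widetilde A\bigr)^{-1}.
\]
Setting $E=\widetilde A$ and $A=\mu\widetilde A-I$ gives $\lambda E-A=(\lambda-\mu)\widetilde A+I$. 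Up to sign conventions, $\mathbf{G}(\lambda)=C(\lambda E-A)^{-1}B(\mu-\lambda)+D$ with order $\delta(\mathbf{G})$, which is a realisation centred at $\lambda_0=\mu$. The pencil is regular because its value at $\lambda=\mu$ is $I$.

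The main obstacle is bookkeeping rather than ideas. One task is checking the exact form of $\widetilde B,\widetilde C,\widetilde D$ so that the centring factor $(\lambda_0-\lambda)$ is absorbed correctly. The other is verifying that the chosen $\mu$ can be real, which holds because only finitely many values are excluded, so all matrices remain real-valued.
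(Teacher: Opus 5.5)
Your reduction is right: show that every centred realisation has order at least $\delta(\mathbf{G})$, and that some centred realisation attains it. The lower-bound half goes through. With $M:=\mu E-A$ and $\widetilde A=-M^{-1}E$, the bookkeeping you left open closes with $\widetilde C=C$, $\widetilde B=\big((\lambda_0-\mu)\widetilde A-I\big)M^{-1}B$ and $\widetilde D=D+(\lambda_0-\mu)CM^{-1}B$. For $\lambda_0=\infty$ it is $\widetilde B=\widetilde AM^{-1}B$ and $\widetilde D=D+CM^{-1}B$.

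The attainability half, however, has a sign error that ``up to sign conventions'' does not absorb. With $E=\widetilde A$ and $A=\mu\widetilde A-I$ you get $\lambda E-A=I+(\lambda-\mu)\widetilde A$, whereas your own identity needs $\pm\big(I-(\lambda-\mu)\widetilde A\big)$. As written, $C(\lambda E-A)^{-1}B(\mu-\lambda)+D$ is, at $w=1/(\lambda-\mu)$, the transfer function of a system with state matrix $-\widetilde A$. Its poles are the negatives of those of $\mathbf{H}$, and no sign placed on $B$, $C$ or $D$ repairs this. For scalar $\widetilde A=a\neq 0$, $\widetilde B=\widetilde C=1$, $\widetilde D=0$, you get $\pm 1/(w+a)$ instead of $1/(w-a)$.

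The correct inversion is $E=\widetilde A$, $A=I+\mu\widetilde A$, $B=\widetilde B$, $C=\widetilde C$, $D=\widetilde D$. Then $\lambda E-A=-\big(I-(\lambda-\mu)\widetilde A\big)$, so $C(\lambda E-A)^{-1}B(\mu-\lambda)+D=\mathbf{H}\big(1/(\lambda-\mu)\big)=\mathbf{G}(\lambda)$. The pencil is regular because $\mu E-A=-I$. Since $A-\mu E=I$, this is exactly the inverse of the correspondence in point~b) of Lemma~\ref{lem:remap}.

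Your invariance sketch for $\delta$ is acceptable if ``common roots'' is read projectively: a drop in the maximal degree is a common root at $w=\infty$. A cleaner route is available. Clearing denominators gives $\delta(\mathbf{G}\circ\varphi)\leq\delta(\mathbf{G})$ for every M\"obius map $\varphi$, because cancelling common factors only lowers degrees. Applying this to $\mathbf{G}\circ\varphi$ and $\varphi^{-1}$ gives equality, with no coprimality check.

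The paper gives no argument of its own here; it only points to point~iii) of Lemma~2 in \cite{H2_ds}. With the fix above, yours is a self-contained proof that uses only classical state-space minimality and the same re-centring device as Lemma~\ref{lem:remap}. It is also more general than a direct appeal to that lemma. You re-centre at an auxiliary real $\mu\notin\Lambda(A-\lambda E)$ rather than at $\lambda_0$. Your lower bound therefore covers realisations centred at $\infty$ and those whose centre is a generalised eigenvalue of the pole pencil, which Lemma~\ref{lem:remap} excludes. That generality is exactly what the definition of minimality, which compares across all centring points, requires. The cost is that you must establish the M\"obius invariance of $\delta(\mathbf{G})$ yourself, whereas the paper outsources the whole matter to the cited reference.
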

	\begin{proof}
		See the proof of point~iii) in Lemma~2 from \cite{H2_ds}.\vspace{-1mm}
	\end{proof}

    The following algebraic property is intrinsically tied to the concept of minimality (see Chapter~3 of \cite{zhou} for details on this connection) and will be used extensively in the sequel.\vspace{-1mm}

    \begin{defn}
		A matrix pair $(A,B)$, with $A\in\mathbb{R}^{n\times n}$ and $B\in\mathbb{R}^{n\times m}$, is called \emph{controllable} if\vspace{-2mm}
		\begin{equation*}
			{\mathrm{rank}\begin{bmatrix}
					B & AB & \dots & A^{n-1}B
				\end{bmatrix}=n,}\vspace{-2mm}
		\end{equation*}
		and $\left(B^\top,A^\top\right)$ is called \emph{observable} if $(A,B)$ is controllable.\vspace{-1.5mm}
	\end{defn}
	
	Finally, we proceed to define the concept of moment from a purely frequency-based perspective, just as in \cite{ORIG_MM}.\vspace{-1mm}
    
    \begin{defn}\label{def:mom}
         For any point $\mu\in\overline{\mathbb{C}}$ that is not a pole of $\mathbf{G}(\lambda)$, the \emph{$i^\text{th}$-order moment} of $\mathbf{G}(\lambda)$ at $\mu$ is defined as\vspace{-1mm}
    	\begin{equation}\label{eq:moment_def}
    		\hspace{-1mm}\eta_{\,\mathbf{G},i}(\mu):= \frac{(-1)^i}{i!}\mathbf{G}^{(i)}(\mu) =\frac{(-1)^i}{i!}\hspace{-1mm}\left.\left(\frac{\mathrm{d}^i}{\mathrm{d}\lambda^i}\mathbf{G}(\lambda)\right)\right\vert_{\lambda=\mu}.\vspace{-2mm}\hspace{-2mm}
    	\end{equation}
    \end{defn}
	
	\subsection{M\" obius Mappings}\label{subsec:FST}
	
	We briefly present here the class of functions discussed at length in Chapter~10 of \cite{complexvar}. Let $\mathfrak{g}:\overline{\mathbb{C}}\mapsto\overline{\mathbb{C}}$ be defined as\vspace{-1mm}
	\begin{equation}\label{eq:moeb}
		\mathfrak{g}(\lambda):=\frac{a\lambda+b}{c\lambda+d}\,,\vspace{-1mm}
	\end{equation}
	for any $a,b,c,d\in\mathbb{C}$ such that $ad-bc\neq0$. This choice of parameters yields the following crucial property.\vspace{-2mm}
	\begin{lemma}\label{lem:inv}
		The function defined as in \eqref{eq:moeb} with $ad-bc\neq0$ is bijective, and its inverse is given by $\mathfrak{g}^{-1}:\overline{\mathbb{C}}\mapsto\overline{\mathbb{C}}$ with\vspace{-1mm}
		\begin{equation*}
			\mathfrak{g}^{-1}(\lambda)=\frac{\phantom{-}d\lambda-b}{-c\lambda+a}\,.\vspace{-1mm}
		\end{equation*}
	\end{lemma}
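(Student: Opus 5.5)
The plan is to show directly that the proposed map $\mathfrak{h}(\lambda):=\frac{d\lambda-b}{-c\lambda+a}$ satisfies $\mathfrak{h}\circ\mathfrak{g}=\mathrm{id}_{\overline{\mathbb{C}}}$ and $\mathfrak{g}\circ\mathfrak{h}=\mathrm{id}_{\overline{\mathbb{C}}}$. Together these give both bijectivity and the formula $\mathfrak{g}^{-1}=\mathfrak{h}$. First I will check that $\mathfrak{h}$ is itself a M\"obius mapping of the form \eqref{eq:moeb}, since its determinant is $da-(-b)(-c)=ad-bc\neq0$. Consequently $\mathfrak{h}$ is well defined on $\overline{\mathbb{C}}$ under the same conventions as $\mathfrak{g}$.

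These conventions must be fixed explicitly. For $c\neq0$ we set $\mathfrak{g}(-d/c):=\infty$ and $\mathfrak{g}(\infty):=a/c$. For $c=0$ we set $\mathfrak{g}(\infty):=\infty$; here $d\neq0$ because $ad\neq0$. The condition $ad-bc\neq0$ ensures that numerator and denominator never vanish simultaneously, so there are no $0/0$ ambiguities.

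For the generic computation, take $\lambda\in\mathbb{C}$ with $c\lambda+d\neq0$ and put $w=\mathfrak{g}(\lambda)$. Multiplying numerator and denominator of $\mathfrak{h}(w)$ by $c\lambda+d$ gives
\begin{equation*}
\mathfrak{h}(w)=\frac{d(a\lambda+b)-b(c\lambda+d)}{-c(a\lambda+b)+a(c\lambda+d)}=\frac{(ad-bc)\lambda}{ad-bc}=\lambda .
\end{equation*}
The analogous computation with the roles of $\mathfrak{g}$ and $\mathfrak{h}$ exchanged gives $\mathfrak{g}(\mathfrak{h}(w))=w$ whenever $-cw+a\neq0$. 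A cleaner way to organise this is to note that the coefficient matrix of $\mathfrak{h}$ is the adjugate of $\left[\begin{smallmatrix}a&b\\c&d\end{smallmatrix}\right]$, whose product with that matrix is $(ad-bc)I_2$. Since scalar multiples of the coefficient matrix give the same map, the composition is the identity.

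The main obstacle is the exceptional points, which I will handle separately. If $c\neq0$, then $\mathfrak{g}(-d/c)=\infty$ and $\mathfrak{h}(\infty)=-d/c$; likewise $\mathfrak{g}(\infty)=a/c$ and $\mathfrak{h}(a/c)=\infty$. So $\mathfrak{h}\circ\mathfrak{g}$ fixes these two points, and the same holds symmetrically for $\mathfrak{g}\circ\mathfrak{h}$. If $c=0$, both maps are affine and both send $\infty\mapsto\infty$, so there is nothing further to check. With both compositions equal to the identity on all of $\overline{\mathbb{C}}$, $\mathfrak{g}$ is bijective with inverse $\mathfrak{g}^{-1}=\mathfrak{h}$.
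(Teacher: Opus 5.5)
Your proof is correct. The paper gives no argument of its own for this lemma; its proof is only a pointer to Chapter~10 of Beardon's textbook, where the bijectivity of M\"obius maps and the inverse formula belong to the standard theory.

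Your proposal replaces that citation with a self-contained verification, in three steps:
\begin{enumerate}
\item You fix the conventions at $\infty$ and at the pole $-d/c$. The paper leaves these implicit when it declares $\mathfrak{g}:\overline{\mathbb{C}}\mapsto\overline{\mathbb{C}}$.
\item You check $\mathfrak{h}\circ\mathfrak{g}$ and $\mathfrak{g}\circ\mathfrak{h}$ algebraically at generic points. Your computation also shows that $-c\,\mathfrak{g}(\lambda)+a=(ad-bc)/(c\lambda+d)\neq 0$, so the finite formula for $\mathfrak{h}$ legitimately applies there.
\item You handle the exceptional points $-d/c$, $a/c$ and $\infty$ by hand, splitting into the cases $c\neq0$ and $c=0$.
\end{enumerate}

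The citation buys brevity and places the lemma inside the group-theoretic framework of M\"obius maps. Your route buys completeness on all of $\overline{\mathbb{C}}$ using only elementary algebra.

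On your ``cleaner'' adjugate shortcut: taken alone, it presupposes that composing M\"obius maps corresponds to multiplying their coefficient matrices on all of $\overline{\mathbb{C}}$, including at $\infty$ and at the points sent to $\infty$. Your explicit case check establishes exactly this for the pair $(\mathfrak{g},\mathfrak{h})$, so the argument as written has no gap.
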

	\begin{proof}
		See Chapter~10 of \cite{complexvar}.\vspace{-2mm}
	\end{proof}
    Given a finite point $\lambda_0\in\mathbb{C}$, a particular instance of the mappings from \eqref{eq:moeb}, denoted $\mathfrak{f}:\overline{\mathbb{C}}\mapsto\overline{\mathbb{C}}$ and defined as\vspace{-1mm}
    \begin{equation}\label{eq:f_def}
        \mathfrak{f}(\lambda):=\frac{1}{\lambda - \lambda_0},\vspace{-1mm}
    \end{equation}
	displays remarkable properties with respect to realisations of type \eqref{eq:css}, as shown by the following result.\vspace{-2mm}
	\begin{lemma}\label{lem:remap}
		Let a transfer function $\mathbf{G}(\lambda)$ be realised as in \eqref{eq:css} with a $\lambda_0\in\mathbb{R}$ that ensures $A-\lambda_0 E$ is invertible. Then, by taking the mapping $\mathfrak{f}(\lambda)$ from \eqref{eq:f_def}, we have that:
		\begin{enumerate}
			\item[a)] The inverse of $\mathfrak{f}(\lambda)$ is given by $\mathfrak{f}^{-1}(\lambda)=\lambda_0+\frac{1}{\lambda}$;\vspace{0.5mm}
				
			\item[b)] The transfer function $\widetilde{\mathbf{G}}(\lambda):=\mathbf{G}\left(\mathfrak{f}^{-1}(\lambda)\right)$ can be realised by $((A-\lambda_0E)^{-1}E,(A-\lambda_0E)^{-1}B,C,D)$;\vspace{0.5mm}
			
			\item[c)] If the realisation from \eqref{eq:css} is minimal, then so is the one from point~\emph{b)} of this result.\vspace{-2mm}
		\end{enumerate}
	\end{lemma}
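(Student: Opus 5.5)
Point~a) follows from Lemma~\ref{lem:inv} applied to $\mathfrak{f}(\lambda)=\frac{0\cdot\lambda+1}{1\cdot\lambda-\lambda_0}$, i.e.\ with $a=0$, $b=1$, $c=1$, $d=-\lambda_0$, so that $ad-bc=-1\neq0$. The lemma gives $\mathfrak{f}^{-1}(\lambda)=\frac{-\lambda_0\lambda-1}{-\lambda}=\lambda_0+\frac{1}{\lambda}$. A direct check that $\mathfrak{f}(\lambda_0+1/\lambda)=\lambda$ would serve equally well.

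For point~b), substitute $\lambda\mapsto\lambda_0+\frac{1}{\lambda}$ into \eqref{eq:css}. Then $\lambda_0-\mathfrak{f}^{-1}(\lambda)=-\frac{1}{\lambda}$ and $\mathfrak{f}^{-1}(\lambda)E-A=-(A-\lambda_0E)+\frac{1}{\lambda}E$. Set $F:=A-\lambda_0E$, which is invertible by assumption, and factor it out:
\begin{equation*}
	\widetilde{\mathbf{G}}(\lambda)=C\left(\tfrac{1}{\lambda}E-F\right)^{-1}B\left(-\tfrac{1}{\lambda}\right)+D
	=C\left(\lambda F-E\right)^{-1}B+D=C\left(\lambda I-F^{-1}E\right)^{-1}F^{-1}B+D .
\end{equation*}
The middle equality comes from multiplying the inverted pencil by $-\frac{1}{\lambda}$, which turns $\left(\frac{1}{\lambda}E-F\right)^{-1}\left(-\frac{1}{\lambda}\right)$ into $(\lambda F-E)^{-1}$. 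The last equality pulls $F$ out on the left. This is exactly the state-space realisation $(F^{-1}E,F^{-1}B,C,D)$, which is centred at $\infty$ with identity descriptor matrix. I will keep track of the sign conventions carefully here, since that is the only spot where an error could slip in.

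For point~c), I rely on Proposition~\ref{prop:min_css}. If the original realisation is minimal, its order equals $n=\delta(\mathbf{G})$. The realisation in b) has the same order $n$, so it suffices to show that $\delta(\widetilde{\mathbf{G}})=\delta(\mathbf{G})$. Write $\mathbf{G}=b_{\mathbf{G}}/a_{\mathbf{G}}$ in coprime form and let $N:=\max\{n_a,n_b\}$. Then
\begin{equation*}
	\widetilde{\mathbf{G}}(\lambda)=\frac{\lambda^{N}b_{\mathbf{G}}(\lambda_0+1/\lambda)}{\lambda^{N}a_{\mathbf{G}}(\lambda_0+1/\lambda)},
\end{equation*}
and both numerator and denominator are polynomials of degree at most $N$. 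Coprimality is preserved: a common root $\mu\neq0$ would give a common root $\lambda_0+1/\mu$ of $a_{\mathbf{G}}$ and $b_{\mathbf{G}}$, which is impossible. The root $\mu=0$ cannot be common either, because at least one of the two polynomials has its leading coefficient times $\lambda^0$ nonzero there (the one attaining degree $N$). Hence $\delta(\widetilde{\mathbf{G}})\le N$. Applying the same argument with the inverse map $\mathfrak{f}$, which sends $\widetilde{\mathbf{G}}$ back to $\mathbf{G}$, gives the reverse inequality.

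Equivalently, since Möbius maps are bijections of $\overline{\mathbb{C}}$, the poles of $\mathbf{G}$ and $\widetilde{\mathbf{G}}$ counted on the Riemann sphere correspond one-to-one, including multiplicity, and $\delta$ is the total number of such poles. The main obstacle is this degree-invariance step, in particular the handling of the root at $0$ (equivalently at $\infty$). The reverse-map symmetry argument avoids any case analysis, so I will use that.
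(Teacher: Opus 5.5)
Your proof is correct. Point a) is handled exactly as in the paper, via Lemma~\ref{lem:inv}. For b) and c), however, the paper gives no argument of its own; it only states that the proof is that of Lemma~2 in \cite{H2_ct}. Your proposal is self-contained instead:
\begin{itemize}
\item for b), you substitute $\mathfrak{f}^{-1}(\lambda)=\lambda_0+\frac{1}{\lambda}$ and factor out $F=A-\lambda_0E$, and your sign bookkeeping checks out;
\item for c), you use Proposition~\ref{prop:min_css} to reduce minimality to the invariance of the McMillan degree under the M\"obius substitution.
\end{itemize}

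This buys transparency. Point c) then rests only on elementary polynomial algebra and on Proposition~\ref{prop:min_css}, which precedes the lemma and is legitimately available, rather than on an external reference. You also correctly note that the state-space realisation from b) is itself a centred realisation (centred at $\infty$ with identity descriptor matrix), which is what allows Proposition~\ref{prop:min_css} to be applied to it.

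One simplification: the reverse-map step is unnecessary. The coefficient of $\lambda^N$ in $\lambda^N p(\lambda_0+1/\lambda)$ equals $p(\lambda_0)$. Moreover, $a_{\mathbf{G}}(\lambda_0)\neq 0$, because the invertibility of $A-\lambda_0E$ makes $\mathbf{G}$ analytic at $\lambda_0$ with $\mathbf{G}(\lambda_0)=D$. Hence the transformed denominator has degree exactly $N$. Together with the coprimality you already established, this gives $\delta(\widetilde{\mathbf{G}})=\delta(\mathbf{G})$ directly.
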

	\begin{proof}
		Point a) follows from Lemma~\ref{lem:inv}, whereas the proof of points b) and c) is exactly that of Lemma~2 in \cite{H2_ct}. \vspace{-1mm}
	\end{proof}
    
	\begin{remark}
		The restriction of $\lambda_0$ to $\mathbb{R}$ in Lemma~\ref{lem:remap} is done purely to preserve the real-valued nature of the matrices that form the realisation expressed in point \emph{a)} of said result. Moreover, the freedom in choosing $\lambda_0\in\mathbb{R}$ ensures that the centring point can \emph{always} be chosen in order to ensure that $A-\lambda_0 E$ is well-conditioned with respect to matrix inversion.\vspace{-3mm}
	\end{remark}
	
	\begin{figure*}
			\begin{equation}\label{eq:zero_cond}\tag{6}
				\small\begin{array}{cc}
					z_\ell\in\mathbb{C}\setminus\{\lambda_0\}\Rightarrow z_\ell\in\Lambda\left( \begin{bmatrix}
						A_m - \lambda E_m &\hspace{-1mm} B_m(\lambda_0-\lambda)\\
						C_m & D_m
					\end{bmatrix} \right),& \begin{array}{l}
						\{z_\ell\}=\{\infty\}\Rightarrow \begin{bmatrix}
							A_m - \lambda E_m &\hspace{-1mm} B_m(\lambda_0-\lambda)\\
							C_m & D_m
						\end{bmatrix}\text{ has an infinite}\\
						\,\text{generalised eigenvalue with a partial multiplicity greater than }1.\vspace{-2mm}
					\end{array}\hspace{-3mm}
				\end{array}\normalsize
			\end{equation}\vspace{-6mm}
		\hrulefill
	\end{figure*}
	
	\section{Main Results}\label{sec:main}\vspace{-1mm}
	
	\subsection{Problem Statement}\label{subsec:prob_st}
	
	Let $\mathbf{G}(\lambda)$ { be described by a minimal centred realisation denoted} $(A,B,C,D,E)_{\lambda_0}$ of order $n=\delta(\mathbf{G)}$, and for which the scalar $\lambda_0\in\mathbb{R}$ is not a pole of $\mathbf{G}(\lambda)$. The objective is to find a second transfer function $\mathbf{H}(\lambda)\not\equiv0$, described by a centred realisation of order $\nu>0$ with $\delta(\mathbf{H})\leq\nu<n$ and denoted $(A_m,B_m,C_m,D_m,E_m)_{\lambda_0}$, which, for a set of \emph{distinct} points in $\overline{\mathbb{C}}\setminus\{\lambda_0\}$ denoted ${\mu_i\not\in\Lambda(A-\lambda E)}$, $p_k$, $z_\ell\not\in\Lambda(A-\lambda E)$, and the integers $n_i>0$, $\alpha>0$, $\beta\geq 0$, $\gamma\geq 0$ which satisfy $ \sum_{i=1}^{\alpha} n_i = \nu $ and $\beta+\gamma \leq \nu$, achieves:
	\begin{enumerate}
		\item[A)] $\eta_{\,\mathbf{G},j}({\mu_i}) = \eta_{\,\mathbf{H},j}({\mu_i}),\,\forall\, i \in \{1:\alpha \},\, j \in \{0:n_i-1\}$;
		\item[B)] $p_k \in \Lambda \left(A_m - \lambda E_m \right), \, \forall\, k \in\{1:\beta\}$;
        \item[C)] $z_\ell$ satisfies \eqref{eq:zero_cond}\stepcounter{equation}, at the top of the next page, $\forall\ell\hspace{-0.25mm}\in\hspace{-0.25mm}\{1:\gamma\}$.\vspace{-1mm}
	\end{enumerate}
	
	\begin{remark}\label{rem:obj_BC}
		It is important to point out that, when $\beta = 0$ or $\gamma = 0$, the corresponding constraints from objectives \emph{B)} or \emph{C)} will be disregarded when constructing $\mathbf{H}(\lambda)$. Note also that these objectives are analogues to classical state-space results (see Chapter~3 of \cite{zhou}, along with \cite{gen_ss} for the infinite zero condition) used to impose that $p_k$ and $z_\ell$ are poles and, respectively, zeros \emph{of the reduced-order model's realisation}.\vspace{-1mm}
	\end{remark}
	
	\subsection{Reduced-order Models for Moment Matching}\label{subsec:fam}
	
	We begin by addressing objective A) stated in the previous section and, in doing so, we hereby provide \emph{an entire class} of reduced-order models achieving moment matching at the desired set of points, along with the corresponding orders. Before doing so, we make the following assumption.\vspace{-1mm}

    \begin{asu}\label{asu:not_id_zero}
        When imposing the pairs $(\mu_i,n_i)$ for moment matching, there exist two indices $i\in\{1:\alpha\}$ and $j\in\{0:n_i-1\}$ such that $\eta_{\,\mathbf{G},j}(\mu_i)\neq 0$.\vspace{-1mm}
    \end{asu}

    \begin{remark}\label{rem:not_id_0}
        Assumption~\ref{asu:not_id_zero} is by no means restrictive, from a practical standpoint, since moment matching habitually seeks to preserve non-zero moments of the system in the complex plane. The reason for this assumption is to ensure that, for all $\mathbf{H}(\lambda)$ satisfying objective \emph{A)}, we have $\mathbf{H}(\lambda)\not\equiv 0$.\vspace{-1mm}
    \end{remark}
    
    The following theorem, which constitutes our main result, provides an appealing parametrisation.\vspace{-1mm}
	
	\begin{thm}\label{thm:class}
		Let $\mathbf{G}(\lambda)$ be given as in Section~\ref{subsec:prob_st}. For\newline any positive integer $\nu<n$, consider any $S\in\mathbb{R}^{\nu\times \nu}$ which is similar to the matrix $J:=\mathrm{Jord}(\mathfrak{f}(\mu_1), n_1, \dots, \mathfrak{f}(\mu_\alpha),n_{\alpha})$ and any row vector $L\in\mathbb{R}^{1\times \nu}$ such that $(L,S)$ is observable. Then, the following two statements hold:
		\begin{enumerate}
			\item[a)] There exists $\Pi\in\mathbb{R}^{n\times \nu}$, which is the unique solution of the generalised Sylvester equation given by\vspace{-1mm}
			\begin{equation} \label{eq:gen_sylv}
				E \Pi + BL = (A-\lambda_0 E)\Pi S\,;\vspace{-2mm}
			\end{equation}
			
			\item[b)] For any column vector $G\in \mathbb{R}^{\nu}$ which ensures that\vspace{-1mm}
			\begin{equation}\label{eq:not_pole}
				\mathfrak{f}(\mu_i)\not\in\Lambda(S-GL-\lambda I_\nu),\,\forall\,i\in\{1:\alpha\},\vspace{-1mm}
			\end{equation}	
			the following transfer function\vspace{1mm}
            \end{enumerate}
			\begin{equation} \label{eq:has}
				\small\begin{array}{c}
				    \mathbf{H}(\lambda) =
                    \left[\begin{array}{c|c}
                        I_{\nu} + \lambda_0 (S-GL) -\lambda (S-GL) & G \\\hline
                        C\Pi & D
                    \end{array}\right]_{\lambda_0}
                    \end{array}\vspace{-1mm}\hspace{-1mm}\normalsize
			\end{equation}
            \begin{enumerate}
			\item[]is well-defined and fulfils objective \emph{A)} {of Section~\ref{subsec:prob_st}.}\vspace{-2mm}
		\end{enumerate}
	\end{thm}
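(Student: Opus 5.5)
The plan is to transport the whole problem through the M\"obius mapping $\mathfrak{f}$ from \eqref{eq:f_def}. This turns it into a classical state-space interpolation problem in the variable $w=\mathfrak{f}(\lambda)$, which can be settled with an explicit error formula; the result is then mapped back.

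\emph{Step 1: the state-space image and point a).} By Lemma~\ref{lem:remap}, $\widetilde{\mathbf{G}}(w):=\mathbf{G}(\mathfrak{f}^{-1}(w))$ has the minimal realisation $(\widetilde A,\widetilde B,C,D)$, where $\widetilde A:=(A-\lambda_0E)^{-1}E$ and $\widetilde B:=(A-\lambda_0E)^{-1}B$. Premultiplying \eqref{eq:gen_sylv} by $(A-\lambda_0E)^{-1}$ turns it into the standard Sylvester equation $\widetilde A\Pi+\widetilde BL=\Pi S$. This equation has a unique solution iff $\Lambda(\widetilde A)\cap\Lambda(S)=\emptyset$.

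The eigenvalues of $\widetilde A$ are $\mathfrak{f}$ applied to the generalised eigenvalues of $A-\lambda E$, with infinite eigenvalues sent to $0$. The eigenvalues of $S$ are the $\mathfrak{f}(\mu_i)$. Since $\mathfrak{f}$ is a bijection (Lemma~\ref{lem:inv}) and $\mu_i\notin\Lambda(A-\lambda E)$, the two spectra are disjoint. Because $\widetilde A,\widetilde B,L,S$ are real, the unique solution $\Pi$ is real as well.

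\emph{Step 2: identify $\mathbf{H}$ through $\mathfrak{f}$.} Put $F:=S-GL$, and read off $A_m=I_\nu+\lambda_0F$ and $E_m=F$ from \eqref{eq:has}. Then $\lambda E_m-A_m=(\lambda-\lambda_0)(F-wI_\nu)$ with $w=\mathfrak{f}(\lambda)$. Substituting into \eqref{eq:css} gives
\[
\mathbf{H}(\lambda)=C\Pi(wI_\nu-F)^{-1}G+D=:\widetilde{\mathbf{H}}(w).
\]
So $\mathbf{H}=\widetilde{\mathbf{H}}\circ\mathfrak{f}$. Moreover $\det(A_m-\lambda E_m)\not\equiv0$, since the pencil equals $I_\nu$ at $\lambda=\lambda_0$, so $\mathbf{H}$ is well defined.

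\emph{Step 3: the error formula.} This is the key computation. Rewrite the Sylvester equation as $(wI-\widetilde A)\Pi=\widetilde BL+\Pi(wI-S)$, and use $wI-S=(wI-F)-GL$. Setting $r(w):=L(wI-F)^{-1}G$, we get
\[
\widetilde{\mathbf{H}}(w)-D=C(wI-\widetilde A)^{-1}\big[\widetilde B\,r(w)+\Pi G(1-r(w))\big].
\]
Hence
\[
\widetilde{\mathbf{G}}(w)-\widetilde{\mathbf{H}}(w)=C(wI-\widetilde A)^{-1}(\widetilde B-\Pi G)\,(1-r(w)).
\]
By the matrix determinant lemma, $1-r(w)=\det(wI-S)/\det(wI-F)$.

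Since $(L,S)$ is observable with a single output, $S$ is nonderogatory. Its characteristic polynomial is therefore $\prod_i(w-\mathfrak{f}(\mu_i))^{n_i}$, where the $\mathfrak{f}(\mu_i)$ are distinct because the $\mu_i$ are. By \eqref{eq:not_pole} the denominator $\det(wI-F)$ does not vanish at the $\mathfrak{f}(\mu_i)$, and $(wI-\widetilde A)^{-1}$ is analytic there by Step~1. Consequently the error $\widetilde{\mathbf{G}}-\widetilde{\mathbf{H}}$ has a zero of order at least $n_i$ at each $\mathfrak{f}(\mu_i)$.

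\emph{Step 4: back to $\lambda$, the main subtlety.} It remains to show that a zero of order $n_i$ of $\widetilde{\mathbf{G}}-\widetilde{\mathbf{H}}$ at $\mathfrak{f}(\mu_i)$ is equivalent to objective~A). Near a finite $\mu_i$, $\mathfrak{f}$ is biholomorphic with $\mathfrak{f}'(\mu_i)\neq0$. Hence $\mathbf{G}-\mathbf{H}=(\widetilde{\mathbf{G}}-\widetilde{\mathbf{H}})\circ\mathfrak{f}$ vanishes to the same order at $\mu_i$, which means the derivatives of orders $0,\dots,n_i-1$ coincide, i.e.\ the moments in Definition~\ref{def:mom} agree.

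The case $\mu_i=\infty$, where $\mathfrak{f}(\mu_i)=0$, needs care. There one must read the moments at infinity in the local coordinate $1/\lambda$, consistently with Definition~\ref{def:inf_pz}. The argument then carries through, because $\mathfrak{f}$ composed with that chart is again locally biholomorphic.

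Finally, Assumption~\ref{asu:not_id_zero} guarantees $\mathbf{H}\not\equiv0$ (Remark~\ref{rem:not_id_0}).
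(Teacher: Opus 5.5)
Your proof is correct. Its skeleton matches the paper's. Point a) is handled exactly as in the Appendix, by premultiplying with $(A-\lambda_0E)^{-1}$ and checking that the spectra are disjoint. Your spectral-mapping remark $\Lambda(\widetilde A)=\mathfrak{f}(\Lambda(A-\lambda E))$ packages the paper's two cases ($\mu_i=\infty$ and $\mu_i$ finite) into a single statement, and you also justify why $\Pi$ is real. Your Step~2 is the paper's part II.

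The routes diverge in the two substantive steps.

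\emph{Interpolation in the $w$-variable.} The paper simply invokes equations (4)--(5) of \cite{PZ_MM}. You derive it directly from the factorisation $\widetilde{\mathbf G}-\widetilde{\mathbf H}=C(wI-\widetilde A)^{-1}(\widetilde B-\Pi G)\det(wI-S)/\det(wI-F)$. This makes the argument self-contained and shows exactly where \eqref{eq:not_pole} enters. It also shows that observability of $(L,S)$ plays no role in the implication itself. Observability matters only because, without it, \eqref{eq:not_pole} cannot be satisfied by any $G$.

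\emph{Transfer back to $\lambda$.} The paper matches derivatives term by term via Fa\`a di Bruno's formula. You instead use that composing with a locally biholomorphic map preserves the order of a zero of $\mathbf G-\mathbf H$. This is shorter and also yields the analyticity of $\mathbf H$ at each $\mu_i$ (the paper's part III). It further lets you treat $\mu_i=\infty$ cleanly in the chart $1/\lambda$, a case the paper's formula does not literally cover, since that formula involves $\mathfrak{f}^{(k)}(\mu_i)$.

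One small inaccuracy: the characteristic polynomial of $S$ follows from its similarity to $J$, not from $S$ being nonderogatory.
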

	\begin{proof}
		See the Appendix.\vspace{-1mm}
	\end{proof}
	
	The following remark highlights the benefits of employing our approach, compared to the one stated in \cite{DAE_MM}.\vspace{-2mm}
	\begin{remark}
		In contrast to the solutions from Corollaries~1 and~2 in \cite{DAE_MM}, note that each member in the class of reduced-order models from Theorem~\ref{thm:class} has a pole pencil that is \emph{guaranteed} to be invertible, ensuring that \emph{all} our reduced-order models have well-defined transfer functions. Moreover, note that the chosen order $\nu$ also acts as a \emph{tight upper bound} on the McMillan degree of these transfer functions, which is \emph{not always} the case when working with descriptor systems whose transfer functions have poles at infinity (see \cite{gen_ss}).\vspace{-2mm}
	\end{remark}
	
	A further benefit of our technique is the fact that it enables the placement of constraints on the poles and the zeros of\newpage\noindent the reduced-order model's transfer function. This feature and its implementation form the core topic of our next subsection.\vspace{-2mm}
	
	\subsection{Pole and Zero Constraints}\label{subsec:pz_con}
	
	We now proceed to tackle the final two objectives stated in Section~\ref{subsec:prob_st}, by extending and adapting the propositions given in Section~3 of \cite{PZ_MM}. Yet, before doing so, we clarify our particular use of terminology via the following remark.\vspace{-2mm}
	
	\begin{remark}\label{rem:inv_pencil}
		Although objectives \emph{B)} and \emph{C)} from Section~\ref{subsec:prob_st} refer only to the centred realisation of the reduced-order model, note that said constraints \emph{almost surely} (in a statistical sense) impact said model's transfer function. Since the properties of controllability and observability are \emph{generic} (see, for example, \cite{Won}), it follows that the points denoted $p_k$ and $z_\ell$ (recall Remark~\ref{rem:obj_BC}) will \emph{almost surely} be poles and zeros of the reduced-order model's transfer function.\vspace{-2mm}
	\end{remark}
	
	We may now state the following result, which tackles the problem of placing the poles of the reduced-order model.\vspace{-2mm}
	
	\begin{prop}\label{prop:poles}
		Under the same framework of notation and assumptions made in Theorem~\ref{thm:class}, define the matrices $D_k:=\mathfrak{f}(p_k)I_\nu-S,\,\forall\,k\in\{1:\beta\}$. Then, each such $D_k$ is nonsingular and, by defining the matrices $\mathcal R_k:=\mathrm{Re}(D_k^{-1})$ along with $\mathcal I_k:=\mathrm{Im}(D_k^{-1})$, if $G\in \mathbb{R}^{\nu}$ satisfies \eqref{eq:not_pole} and\vspace{-1mm}
		\begin{equation}\label{eq:pole_cond}
			\small\begin{bmatrix}
			    1\\0
			\end{bmatrix}+\small\begin{bmatrix}
			    L\mathcal R_k\\L\mathcal I_k
			\end{bmatrix}G=\small\begin{bmatrix}
			    0\\0
			\end{bmatrix},\,\forall\,k\in\{1:\beta\},\vspace{-1mm}\normalsize
		\end{equation}
		it follows that the centred realisation given in \eqref{eq:has} also satisfies objective \emph{B)} from Section~\ref{subsec:prob_st}.\vspace{-2mm}
	\end{prop}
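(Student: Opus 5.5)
The plan is to reduce objective \emph{B)} to an eigenvalue condition on the matrix $S-GL$. That condition is then rewritten, via the matrix determinant lemma, as a scalar equation that is linear in $G$. Throughout, $\mathfrak{f}$ is the Möbius mapping from \eqref{eq:f_def}, which is a bijection of $\overline{\mathbb{C}}$ by Lemma~\ref{lem:inv}.

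\textbf{Step 1 (nonsingularity of $D_k$).} Since $S$ is similar to $J=\mathrm{Jord}(\mathfrak{f}(\mu_1),n_1,\dots,\mathfrak{f}(\mu_\alpha),n_\alpha)$, we have $\Lambda(S-\lambda I_\nu)=\{\mathfrak{f}(\mu_1),\dots,\mathfrak{f}(\mu_\alpha)\}$. The points $p_k,\mu_i$ are distinct and all differ from $\lambda_0$. Hence $\mathfrak{f}(p_k)$ is finite; this includes the case $p_k=\infty$, where $\mathfrak{f}(\infty)=0$. Injectivity of $\mathfrak{f}$ then gives $\mathfrak{f}(p_k)\neq\mathfrak{f}(\mu_i)$ for all $i$. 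So $\mathfrak{f}(p_k)$ is not an eigenvalue of $S$, and $D_k=\mathfrak{f}(p_k)I_\nu-S$ is invertible.

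\textbf{Step 2 (translating objective \emph{B)}).} In \eqref{eq:has}, $A_m=I_\nu+\lambda_0(S-GL)$ and $E_m=S-GL$, so the pole pencil is $I_\nu+(\lambda_0-\lambda)(S-GL)$. I will show that, for every $p\in\overline{\mathbb{C}}\setminus\{\lambda_0\}$,
\[
p\in\Lambda(A_m-\lambda E_m)\iff \mathfrak{f}(p)\in\Lambda(S-GL-\lambda I_\nu).
\]
\emph{Finite $p$.} Here $\lambda_0-p=-1/\mathfrak{f}(p)$, with $\mathfrak{f}(p)\neq 0$. Therefore
\[
\det\bigl(I_\nu+(\lambda_0-p)(S-GL)\bigr)=(-\mathfrak{f}(p))^{-\nu}\det\bigl(S-GL-\mathfrak{f}(p)I_\nu\bigr),
\]
and the left-hand side vanishes exactly when $\mathfrak{f}(p)$ is an eigenvalue of $S-GL$.

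\emph{Infinite $p$.} The pencil has an infinite generalised eigenvalue iff $E_m=S-GL$ is singular. This is the same as $0=\mathfrak{f}(\infty)$ being an eigenvalue of $S-GL$.

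Both cases are covered by the uniform statement above. This case split is the only delicate point, and I expect it to be the main (mild) obstacle: one must check that the convention for infinite generalised eigenvalues matches singularity of $E_m$.

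\textbf{Step 3 (determinant lemma and real splitting).} Using Step 1 and the matrix determinant lemma,
\[
\det\bigl(\mathfrak{f}(p_k)I_\nu-S+GL\bigr)=\det(D_k)\det\bigl(I_\nu+D_k^{-1}GL\bigr)=\det(D_k)\bigl(1+LD_k^{-1}G\bigr).
\]
Because $\det(D_k)\neq0$, the point $\mathfrak{f}(p_k)$ is an eigenvalue of $S-GL$ iff $1+LD_k^{-1}G=0$. Now $L$ and $G$ are real and $D_k^{-1}=\mathcal R_k+\mathrm{i}\,\mathcal I_k$. Taking real and imaginary parts, this complex scalar equation is equivalent to the pair
\[
1+L\mathcal R_kG=0,\qquad L\mathcal I_kG=0,
\]
which is precisely \eqref{eq:pole_cond}.

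\textbf{Conclusion.} If $G$ satisfies \eqref{eq:not_pole}, then $\mathbf{H}(\lambda)$ is well-defined and fulfils objective \emph{A)} by Theorem~\ref{thm:class}. If $G$ also satisfies \eqref{eq:pole_cond} for every $k\in\{1:\beta\}$, then Steps 2 and 3 give $p_k\in\Lambda(A_m-\lambda E_m)$ for all $k$, which is objective \emph{B)}.
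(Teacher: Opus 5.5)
Your proposal is correct and is essentially the paper's proof: the paper obtains $\det(S-GL-\mathfrak{f}(p_k)I_\nu)=0$ by applying the Schur determinant formula to the bordered matrix $\begin{bmatrix}1 & L\\ G & -D_k\end{bmatrix}$, which amounts to the matrix determinant lemma you use, and then makes the same finite/infinite case split on the pencil $I_\nu+(\lambda_0-\lambda)(S-GL)$. You should state explicitly, as the paper does, that this pencil is regular (its determinant equals $1$ at $\lambda=\lambda_0$), since that is what lets you read off generalised eigenvalues from $\det(A_m-pE_m)=0$ and from $\det E_m=0$; your chain of equivalences then also yields the converse of the stated implication.
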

	\begin{proof}
		See the Appendix.\vspace{-2mm}
	\end{proof}

    \begin{remark}
        Note that, from a numerical standpoint, the matrix pairs $(\mathcal R_k, \mathcal I_k)$ can be computed before tackling the linear constraints from \eqref{eq:pole_cond}, which enables the checking of these constraints' numerical conditioning. As expected, when placing poles in the reduced-order model that are close to the locations in $\overline{\mathbb{C}}$ where the original model's moments are being matched, numerical accuracy naturally becomes a crucial and sensitive aspect of the reduction process.\vspace{-2mm}
    \end{remark}
	
	A more elaborate result than the one above can be derived when attempting to impose the reduced-order model's zeros.\vspace{-2mm}
    
	\begin{thm}\label{thm:zeros}
		
		Under the same framework of notation and assumptions made in Theorem~\ref{thm:class}, consider any matrix $S_z$ which is similar to $J_z:=\mathrm{diag}(\mathfrak{f}(z_1),\dots,\mathfrak{f}(z_\gamma))$, along with any vector $R_z\in\mathbb{R}^{\gamma}$ such that $(S_z,R_z)$ is controllable. Then:
		\begin{enumerate}
			\item[a)] There exists $\widehat{\Pi}\in\mathbb{R}^{\nu\times n}$ such that $\widehat{\Pi}\Pi=I_\nu$ and, for any such $\widehat{\Pi}$, the matrix $\Upsilon_z\in\mathbb{R}^{\gamma\times n}$ is the unique solution of the generalised Sylvester equation given by\vspace{-2mm}
			\begin{equation}\label{eq:sylv_z}
				S_z\Upsilon_z(A-\lambda_0 E) = \Upsilon_z E + R_z \big(C+DL\widehat{\Pi}\big)\,;\vspace{-2mm}
			\end{equation}
			
			\item[b)] If $\mathrm{rank}(\Upsilon_z(A-\lambda_0 E)\Pi) = \gamma$ and $\Upsilon_zB=O$, whereas $G\in \mathbb{R}^{\nu}$ satisfies \eqref{eq:not_pole}  along with\vspace{-2mm}
			\begin{equation}\label{eq:suff_sing}
			\Upsilon_z(A-\lambda_0 E)\Pi G + R_zD = O,\vspace{-2mm}
			\end{equation}
				then the centred realisation given in \eqref{eq:has} also satisfies objective \emph{C)} from Section~\ref{subsec:prob_st}.\vspace{-1mm}
		\end{enumerate}
	\end{thm}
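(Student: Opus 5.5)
The plan is to move everything to the $\mathfrak{f}$-domain using Lemma~\ref{lem:remap}. There the original model becomes the state-space system $(\widehat A,\widehat B,C,D)$ with $\widehat A:=(A-\lambda_0E)^{-1}E$ and $\widehat B:=(A-\lambda_0E)^{-1}B$. The reduced model \eqref{eq:has} becomes a plain state-space system, because its pole pencil satisfies $A_m-\lambda_0E_m=I_\nu$, so by point~b) of Lemma~\ref{lem:remap} it is realised by $(F,G,C\Pi,D)$ with $F:=S-GL$. The zero conditions then reduce to the classical invariant-zero (PBH-type) test for $(F,G,C\Pi,D)$ at the points $\mathfrak{f}(z_\ell)$. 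These are pulled back to \eqref{eq:zero_cond} through the identity
\begin{equation*}
\begin{bmatrix} I_\nu+(\lambda_0-\lambda)F & G(\lambda_0-\lambda)\\ C\Pi & D\end{bmatrix}=\begin{bmatrix}(\lambda_0-\lambda)I_\nu & O\\ O & 1\end{bmatrix}\begin{bmatrix} F-\mathfrak{f}(\lambda)I_\nu & G\\ C\Pi & D\end{bmatrix},
\end{equation*}
which follows from $\frac{1}{\lambda_0-\lambda}=-\mathfrak{f}(\lambda)$.

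For point~a), I first need $\Pi$ from Theorem~\ref{thm:class} to have full column rank $\nu$; then a left inverse $\widehat\Pi$ exists. Multiplying \eqref{eq:gen_sylv} on the left by $(A-\lambda_0E)^{-1}$ gives $\widehat A\Pi+\widehat BL=\Pi S$. Here $(\widehat A,\widehat B)$ is controllable by minimality (point~c) of Lemma~\ref{lem:remap}) and $(L,S)$ is observable. The spectra of $\widehat A$ and $S$ are disjoint because $\mu_i\notin\Lambda(A-\lambda E)$ and $\mathfrak{f}$ is bijective. Full rank of $\Pi$ then follows from the standard argument for Sylvester equations with disjoint spectra and controllable/observable data, as in classical moment matching \cite{ORIG_MM}. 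This uses $\nu<n$ together with the fact that $\Pi$ is built from the columns $(\mathfrak{f}(\mu_i)I-\widehat A)^{-k}\widehat B$, which are independent by controllability.

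Next, I multiply \eqref{eq:sylv_z} on the right by $(A-\lambda_0E)^{-1}$ to get $S_z\Upsilon_z-\Upsilon_z\widehat A=R_z(C+DL\widehat\Pi)(A-\lambda_0E)^{-1}$. This is a standard Sylvester equation, so it is uniquely solvable once $\Lambda(S_z)\cap\Lambda(\widehat A)=\emptyset$. That disjointness holds because the eigenvalues of $\widehat A$ are $\mathfrak{f}$ of the generalised eigenvalues of $A-\lambda E$ (with $0$ corresponding to infinite ones), while $z_\ell\notin\Lambda(A-\lambda E)$ and $z_\ell\neq\lambda_0$.

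For point~b), set $Y:=\Upsilon_z(A-\lambda_0E)\Pi\in\mathbb{R}^{\gamma\times\nu}$ and right-multiply \eqref{eq:sylv_z} by $\Pi$. Using \eqref{eq:gen_sylv}, $\Upsilon_zB=O$ and $\widehat\Pi\Pi=I_\nu$, I expect
\begin{equation*}
S_zY=YS+R_zC\Pi+R_zDL.
\end{equation*}
Condition \eqref{eq:suff_sing} says $YG=-R_zD$, so $YF=YS+R_zDL$ and hence $S_zY-YF=R_zC\Pi$. Now take a left eigenvector $w_\ell$ of $S_z$ for $\mathfrak{f}(z_\ell)$. Then the row vector $[\,w_\ell Y\;\; w_\ell R_z\,]$ annihilates $\begin{bmatrix}F-\mathfrak{f}(z_\ell)I & G\\ C\Pi & D\end{bmatrix}$ from the left. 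This vector is nonzero: $w_\ell R_z\neq0$ by the PBH test for controllability of $(S_z,R_z)$, and $w_\ell Y\neq 0$ by $\mathrm{rank}\,Y=\gamma$. So the system matrix loses rank at $\mathfrak{f}(z_\ell)$. For finite $z_\ell\neq\lambda_0$ the displayed factorisation has an invertible left factor at $\lambda=z_\ell$, which gives the first case of \eqref{eq:zero_cond}.

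The main obstacle I expect is the case $z_\ell=\infty$, where $\mathfrak{f}(z_\ell)=0$ and I must show an infinite generalised eigenvalue of partial multiplicity greater than $1$. I would reverse the variable, $\lambda=1/t$, and examine the reversed pencil at $t=0$. The left factor $\mathrm{diag}((\lambda_0-\lambda)I,1)$ already contributes one order of degeneracy at infinity in the first $\nu$ rows. The rank drop of the right factor at $\mathfrak{f}=0$ should add an extra order, lengthening a Kronecker infinite chain. I would make this rigorous by exhibiting an explicit left Jordan chain of length two at $t=0$ built from $[\,w_\ell Y\;\;w_\ell R_z\,]$, appealing to \cite{gen_ss} for the equivalence between such chains and partial multiplicities at infinity.
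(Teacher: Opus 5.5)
Part~a) and the finite-zero half of part~b) follow the paper's proof closely. You use a left eigenvector $w_\ell$ of $S_z$, the identity $S_zY-YF=R_zC\Pi$, a nonzero left null vector of the system matrix at $\mathfrak{f}(z_\ell)$, and then the factor $\mathrm{diag}((\lambda_0-\lambda)I_\nu,1)$. Using PBH to get $w_\ell R_z\neq0$ is a valid substitute for the paper's appeal to $\mathrm{rank}\,Y=\gamma$.

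One small slip: right-multiplying \eqref{eq:sylv_z} by $(A-\lambda_0E)^{-1}$ gives $\Upsilon_zE(A-\lambda_0E)^{-1}$, not $\Upsilon_z\widehat A$. The two are similar, so your disjoint-spectra argument survives; alternatively, substitute $\Upsilon_z(A-\lambda_0E)$ as the paper does.

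The genuine gap is that you never show the system pencil in \eqref{eq:zero_cond} is regular. A determinant vanishing at $z_\ell$ makes $z_\ell$ a generalised eigenvalue only if that determinant is not identically zero. Any statement about partial multiplicities at infinity also presupposes regularity. The paper obtains it from Assumption~\ref{asu:not_id_zero} via Remark~\ref{rem:not_id_0}:
\begin{itemize}
\item by a Schur complement, the determinant of the system pencil equals $\det\big(I_\nu+(\lambda_0-\lambda)F\big)\,\mathbf{H}(\lambda)$;
\item the first factor equals $1$ at $\lambda=\lambda_0$;
\item $\mathbf{H}\not\equiv0$.
\end{itemize}
Your proposal never invokes this assumption. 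As written, the passage from a rank drop to objective~C) is therefore unjustified.

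For $z_\ell=\infty$ you give a plan rather than a proof, and the heuristic you offer would not suffice. Counting ``orders of degeneracy'' only shows that the infinite eigenvalue has algebraic multiplicity at least $2$. That is equally compatible with two Jordan blocks of size one, which is not an infinite zero.

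The paper excludes this case by proving that $\begin{bmatrix}S-GL & G\end{bmatrix}$ has full row rank. Here $S$ is invertible because every $\mu_i$ differs from $z_\ell=\infty$, and full row rank in effect forces a single block at infinity. The paper then applies Corollary~1 of \cite{gen_ss}.

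Your explicit-chain idea is a genuinely different and more self-contained route, and it does work. Write the pencil as $M-\lambda N$ with $N=\begin{bmatrix}F & G\\ O & 0\end{bmatrix}$. The rows $x_0=\begin{bmatrix}O & 1\end{bmatrix}$ and $x_1=-(w_\ell R_z)^{-1}\begin{bmatrix}w_\ell Y & 0\end{bmatrix}$ satisfy $x_0N=O$ and $x_1N=x_0M=\begin{bmatrix}C\Pi & D\end{bmatrix}$. This follows from your annihilation identity at $\mathfrak{f}(z_\ell)=0$ together with $w_\ell R_z\neq0$. For a regular pencil, the Weierstrass form turns this length-two left chain of $tM-N$ at $t=0$ into a partial multiplicity of at least $2$ at $\lambda=\infty$.

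This route needs neither the invertibility of $S$ nor \cite{gen_ss}. You must still write it out, however, and it again rests on the regularity discussed above.
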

	\begin{proof}
		See the Appendix.\vspace{-2mm}
	\end{proof}

    \begin{figure*}
        \begin{equation}\label{eq:prob_opt}\tag{14}
            \mathcal{P}(X,Y,\Theta):=\left\{\begin{array}{l}
                \min\limits_{W,\widehat{W},G,\widehat{\Pi},P_{W},P_{G}}\left\|\small\begin{bmatrix}
                    T_C + XPY + T_APY + XPT_B & (T_A+X)P\\
                    P(T_B+Y) & P
                \end{bmatrix}\right\|_*,\\\vspace{-3mm}\\
                \text{\phantom{aa}subject to }\widehat{\Pi}\Pi=I_\nu,\ \eqref{eq:sylv_W}\text{ holds},\ P_W\succ O,\ P_G + R_zD = O,\ \Theta = O.
            \end{array}\right.\vspace{-1mm}
        \end{equation}
        \hrulefill\vspace{-6mm}
    \end{figure*}

    Before showcasing the efficiency of our technique on a practical problem inspired by the electric circuits from \cite{app}, we first discuss the numerical considerations of satisfying the algebraic conditions derived in this section of the paper.\vspace{-1mm}
	  
	\subsection{Numerical Considerations}\label{subsec:num_cons}

    We begin by pointing out that point a) of Theorem~\ref{thm:class} and the entire statement of Proposition~\ref{prop:poles} revolve around linear equality constraints, in which $\Pi$ and $G$ are the variable terms, respectively. Indeed, once a solution for the generalised Sylvester equation in \eqref{eq:gen_sylv} is found (either by a dedicated linear algebra solver or by an optimisation toolbox computing a feasible solution for a linear programming problem) and the class of systems in \eqref{eq:has} is parametrised solely in $G$,  note that pole placement for the reduced-order model amounts to merely solving the system of linear equations in \eqref{eq:pole_cond}.
    
    If more than one solution exists  to this problem, then we can further decompose the gain matrix as $G=G_0+Fx$, where $G_0$ is any vector in $\mathbb{R}^\nu$ satisfying \eqref{eq:pole_cond}, $F\in\mathbb{R}^{\nu\times q}$ is chosen as being any basis matrix for the right nullspace of $\begin{bmatrix}
        (L\mathcal R_1)^\top & (L\mathcal I_1)^\top & \dots & (L\mathcal R_\beta)^\top & (L\mathcal I_\beta)^\top
    \end{bmatrix}^\top$ (having denoted the dimension of this space as $q$), and $x$ is arbitrary in $\mathbb{R}^q$. Replacing the newly given expression of $G$ in \eqref{eq:has}, and subsequently parametrising in terms of $x$, will yield a class of reduced-order models with the desired pole structure.

    On the other hand, the more challenging aspect of the technique proposed in this paper is \emph{simultaneously} satisfying the conditions in Theorem~\ref{thm:zeros}. The identity $\Upsilon_zB=O$ can be tackled by rewriting $\Upsilon_z=(F_BW)^\top$, where $F_B\in\mathbb{R}^{n\times q_B}$ is chosen as being any basis matrix for the right nullspace of $B^\top$ (having denoted the dimension of this space as $q_B$) and $W$ is arbitrary in $\mathbb{R}^{q_B\times \gamma}$, thus turning \eqref{eq:sylv_z} into
    \begin{equation}\label{eq:sylv_W}
        \resizebox{.875\columnwidth}{!}{$
        S_z^{}W^\top\big(F_B^\top(A-\lambda_0E)\big)=W^\top\big(F_B^\top E\big)+R_z\big(C+DL\widehat{\Pi}\big),
        $}
    \end{equation}
    which can be solved for $W$. A trivial approach (that often works in practical cases; see, for example, the one presented in Section~\ref{subsec:reduce}) would be to subsequently check whether, for the obtained $W$, the identity $\mathrm{rank}\big(W^\top F_B^\top(A-\lambda_0 E)\Pi\big)=\gamma$ holds and to verify that the linear equation system in \eqref{eq:sylv_W} is compatible, \emph{i.e.}, it admits a solution with respect to $G$.

    In contrast to the above-stated sequence of steps, it would be desirable (from not only a procedural perspective) to \emph{simultaneously compute} both $W$ and $G$, while at the same time freeing up the matrix $\widehat{\Pi}$ as a decision variable (which must implicitly be fixed, in order to first solve for $W$). The main challenge here is given by the inherent nonlinearity of both $W^\top F_B^\top(A-\lambda_0 E)\Pi G$ and $\mathrm{rank}\big(W^\top F_B^\top(A-\lambda_0 E)\Pi\big)$, with the latter term being a discontinuous function as well. Notably, the most readily available solution to combat these shortcomings is to employ the iterative optimisation scheme from \cite{BMI2LMI}, which has shown rapid convergence in practice compared to other techniques in literature (see, for example, the runtime analysis presented in Section~V-B of \cite{aug_sparse}).\newpage
    \begin{algorithm}\vspace{1mm}
		\textbf{Initialization:} Solve the constraints in \eqref{eq:prob_opt} while replacing $\Theta=O$ with $W-\widehat{W}=O$ for the tuple $\left(W^0,\widehat{W}^0,G^0,\widehat{\Pi}^0,P_{W}^0,P_{G}^0\right)$, use these matrices to form $T_A^0$, $T_B^0$, and $T_C^0$ as in \eqref{eq:aux_mat}, set $k\leftarrow0$, and then compute $f^0=\left\|T_C^{0}-T_A^{0}PT_B^{0}\right\|_*$\;
		
		\Repeat{$f^{k-1}-f^{k}<\eta_1$ or $f^k<\eta_2$}{
            $k\leftarrow k+1$\;\vspace{1mm}
            
			\eIf{$(k-1)\ \mathrm{mod}\ 2<1$}{
				Set $\Theta^k=P\left(T_B-T_B^{k-1}\right)$\;
			}{Set $\Theta^k=\left(T_A-T_A^{k-1}\right)P$\;}
			
			Solve $\mathcal{P}\left(-T_A^{k-1},-T_B^{k-1},\Theta^k\right)$ for the tuple $\left(W^k,\widehat{W}^k,G^k,\widehat{\Pi}^k,P_{W}^k,P_{G}^k\right)$ and use these matrices to form $T_A^{k}$, $T_B^{k}$, and $T_C^{k}$ as in \eqref{eq:aux_mat}\;\vspace{1mm}
			
			Compute $f^{k}=\left\|
			T_C^{k}-T_A^{k}PT_B^{k}\right\|_*$\;\vspace{2mm}
			
		}\vspace{2mm}\caption{Convex procedure for zero placement}\label{alg:iter}
	\end{algorithm}\vspace{-7mm}
    
    \noindent\hrulefill\vspace{0mm}
    
    As per Remark III.3 of \cite{aug_sparse}, the key lies in the fact that\vspace{-1mm}
    \begin{equation*}
        \mathrm{rank}\big(W^\top F_B^\top(A-\lambda_0 E)\Pi\big)=\gamma \iff W^\top \Psi W\succ O,\vspace{-1mm}
    \end{equation*}
    where $\Psi:= F_B^\top(A-\lambda_0 E)\Pi\Pi^\top(A-\lambda_0 E)^\top F_B$. This allows us to define the optimisation problem given in \eqref{eq:prob_opt}\stepcounter{equation}, located at the top of this page, having also defined the four matrices\vspace{-1mm}
    \begin{equation}\label{eq:aux_mat}
        \hspace{-1mm}\small\begin{array}{rcl}
                T_A&\hspace{-2mm}:=\hspace{-2mm}&\mathrm{diag}\left(W^\top,W^\top,O\right)\in\mathbb{R}^{(2\gamma+q_B)\times(2q_B+1)},\\
                P&\hspace{-2mm}:=\hspace{-2mm}&\mathrm{diag}\left(\Psi,F_B^\top(A-\lambda_0 E)\Pi,1\right)\in\mathbb{R}^{(2q_B+1)\times(q_B+\nu+1)},\\
                T_B&\hspace{-2mm}:=\hspace{-2mm}&\mathrm{diag}\left(\widehat{W},G,O\right)\in\mathbb{R}^{(q_B+\nu+1)\times(2\gamma+1)},\\
                T_C&\hspace{-2mm}:=\hspace{-2mm}&\mathrm{diag}\left(P_W,P_G,W-\widehat{W}\right)\in\mathbb{R}^{(2\gamma+q_B)\times(2\gamma+1)}.\vspace{-4.5mm}
            \end{array}\vspace{0.5mm}\normalsize
    \end{equation}
    The problem in \eqref{eq:prob_opt} will be solved as per the iterations in Algorithm~\ref{alg:iter}, shown above and inspired by Algorithm~1 in \cite{BMI2LMI}, for which we select two tolerance values $0<\eta_1,\eta_2\ll 1$. If the loop in Algorithm~\ref{alg:iter} terminates (the proposed procedure has guaranteed convergence; see \cite{BMI2LMI}) due to $f^k<\eta_2$, then the problem has been solved successfully. If the condition $f^{k-1}-f^k<\eta_1$ is the one that terminates the iteration, then the employed initialisation has likely converged to a non-zero local optimum, and another initialisation must be selected.\vspace{-2mm}

    \begin{remark}
        Note that condition \eqref{eq:not_pole} is not explicitly tackled at any point during the computational phase of our design procedure. This is owed to the fact that, statistically speaking, this condition will \emph{almost surely} be satisfied when imposing no restrictions upon the set of unconstrained eigenvalues belonging to $S-GL$ (as opposed to those fixed via the equalities in Proposition~\ref{prop:poles}). The proximity of these eigenvalues to the points $\mathfrak{f}(\mu_i)$ may yield a poorly conditioned solution to the moment-matching problem; however, this does not invalidate our theoretical results, and the optimisation of our solutions' conditioning is beyond the scope of this paper.\vspace{-2mm}
    \end{remark}

    We now demonstrate the efficiency of our technique on a practical problem, inspired by the electrical circuits in \cite{app}.

    \begin{figure*}[h]
    \centering
    
    \subfloat[Input stage of the differentiator circuits\label{fig:input_stage}]{
        \scalebox{0.6}{
        \begin{tikzpicture}[line width=0.6pt]
            \draw (0,0) node[op amp] (opamp) {};
            
            \draw (opamp.+) -- (-1.2, -0.5) node[ground] {};
            
            \draw (opamp.-) -- (-1.5, 0.5) node[circle, fill, inner sep=1.3pt] (vminus) {};
            
            \draw (vminus) to[R, l_=\Large$R_{\text{in}}$] (-3.2, 0.5)
                           to[C, l_=\Large$C$] (-4.8, 0.5)
                           to[short, -*] (-5.2, 0.5) node[left] {\Large$u(t)$};
            
            \draw (vminus) -- (-1.5, 1.8)
                           to[L, l=\Large$L$] (0, 1.8)
                           to[R, l=\Large$R_{f1}$] (1.5, 1.8)
                           -- (1.5, 0) -- (opamp.out);
            
            \draw (opamp.out) to (2.2, 0) node[right] {\Large$v_{o1}(t)$};
            \draw (1.5, 0) node[circle, fill, inner sep=1.3pt]{};
            \draw (2.2, 0) node[circle, fill, inner sep=1.3pt]{};
        \end{tikzpicture}
        }
    }
    \hfill
    \subfloat[Output stage of the circuit whose dynamics have a right-half plane zero\label{fig:g1_unstable}]{
        \scalebox{0.6}{
        \begin{tikzpicture}[line width=0.6pt]
            \draw (0,0) node[op amp] (opamp) {};

            \draw (opamp.+) -- (-1.2, -0.5) node[ground] {};

            \draw (opamp.-) -- (-1.5, 0.5) node[circle, fill, inner sep=1.3pt] (vminus) {};

            \draw (vminus) -- (-1.5, 1.2)
                           to[R, l_=\Large$R_m$, -*] (-3.2, 1.2) node[left] {\Large$u(t)$};
            \draw (vminus) -- (-1.5, -0.2)
                           to[R, l=\Large$R_m$, -*] (-3.2, -0.2) node[left] {\Large$v_{o1}(t)$};

            \draw (vminus) -- (-1.5, 2.0)
                           to[R, l=\Large$R_{f2}$] (1.5, 2.0)
                           -- (1.5, 0) -- (opamp.out);

            \draw (opamp.out) to (2.2, 0) node[right] {\Large$y_1(t)$};
            \draw (1.5, 0) node[circle, fill, inner sep=1.3pt]{};
            \draw (2.2, 0) node[circle, fill, inner sep=1.3pt]{};
        \end{tikzpicture}
        }
    }
    \hfill
    \subfloat[Output stage of the circuit whose dynamics have no right-half plane zeros\label{fig:g2_stable}]{
        \scalebox{0.6}{
        \begin{tikzpicture}[line width=0.6pt]
            \draw (0,0) node[op amp] (opamp) {};
            
            \draw (opamp.+) -- (-1.5, -0.5)
                          to[short, -*] (-3.2, -0.5) node[left] {\Large$u(t)$};
            
            \draw (opamp.-) -- (-1.5, 0.5) node[circle, fill, inner sep=1.3pt] (vminus) {};
            \draw (vminus) -- (-1.5, 0.5)
                           to[R, l_=\Large$R_m$] (-3.2, 0.5)
                           to[short, -*] (-3.2, 0.5) node[left] {\Large$v_{o1}(t)$};
            
            \draw (vminus) -- (-1.5, 1.8)
                           to[R, l=\Large$R_{f2}$] (1.5, 1.8)
                           -- (1.5, 0) -- (opamp.out);
            
            \draw (opamp.out) to (2.2, 0) node[right] {\Large$y_2(t)$};
            \draw (1.5, 0) node[circle, fill, inner sep=1.3pt]{};
            \draw (2.2, 0) node[circle, fill, inner sep=1.3pt]{};
        \end{tikzpicture}
        }
    }
    
    \caption{Operational amplifier topologies that form the building blocks of the differentiation circuit from the numerical example}
    \label{fig:circuit_topologies}
    {\hrulefill}\vspace{-5mm}
\end{figure*}
    
	\section{Numerical Example}\label{sec:num_ex}

    \subsection{Modelling of the Descriptor System}\label{subsec:model}
        To highlight the capabilities of our proposed framework for computing a reduced-order model, while also preserving key structural elements of the original plant, we {hereby} aim to obtain a $6^{\text{th}}$-order continuous-time model that possesses a triple pole at $\{\infty\}$ and only one zero in $\{s\in\mathbb{C}\,\vert\,\mathrm{Re}(s) > 0\}$.\vspace{-1mm}

        \begin{remark}
            Since the system from our numerical example operates in continuous time, we adopt the specialised notation $\lambda=s$, and we subsequently refer to system-theoretical notions that pertain to the continuous-time context.\vspace{-1mm}
        \end{remark}

        Although the type of model we aim to obtain (recall Definition~\ref{def:inf_pz}, regarding infinite poles and zeros) is uncommon in practical applications, it can, however, be elegantly obtained by cascading three systems described by transfer functions of the following form
    \begin{align}\nonumber
    {\mathbf{G}_k(s) = \dfrac{b_{k2}s^2 + b_{k1}s + b_{k0}}{a_{k1}s + a_{k0}}, \, b_{kj} \neq 0,\,\forall\, j\in\{0:2\},}\\{a_{ki} \neq 0,\,\forall\,i\in\{0:1\},}\label{eq:desired_tf}
    \end{align}
    where each such transfer function has a pole at $\{\infty\}$. We now present the methods used to obtain these transfer functions.

    The transfer function from \eqref{eq:desired_tf} can be implemented using differentiator operational amplifier (op-amp) circuits, which are widely employed in practical applications, such as the circuit-based controllable capacitor shown in Figure~2 of~\cite{app}. This topology serves as the foundation for each of our three stages, providing the desired dynamics for this particular numerical example. To implement it, we add a resistance on the input path towards the negative terminal of the op-amp and an inductance on the feedback loop, as shown in Figure \ref{fig:input_stage} at the top of the page, while pointing out that this implementation is not typically used in practice. 

    Considering the dynamics of the op-amp displayed in Figure~\ref{fig:input_stage}, having a virtual ground at the inverting terminal which ensures that $V_+ = V_- = 0 \, \text{V}$, we can express the input voltage $u(t)$ on the input branch as\vspace{-1mm}
    \begin{equation}\label{eq:fig_volt}
        u(t) = v_C(t) + R_{\text{in}}i_{\text{in}}(t),\vspace{-1mm}
    \end{equation}
    where $i_{\text{in}}(t)$ is the current flowing through said branch. Knowing that $i_{\text{in}}(t)$ also flows through the capacitor and, therefore, can be expressed as \small$i_{\text{in}}(t) = C \diff{v_C(t)}{t}$\normalsize, differentiating the expression in \eqref{eq:fig_volt} with respect to time yields
    \begin{equation}\label{eq:modelling_1}
    \diff{u(t)}{t} = \left(\dfrac{1}{C} + R_{\text{in}}\diff{}{t}\right)i_{\text{in}}(t).
    \end{equation}
    
    By applying Kirchhoff's Voltage Law along the feedback branch between the inverting terminal node $V_-$ and the output node $v_{o1}(t)$, accounting for the fact that the output current is equal to the input current $i_{\text{in}}(t)$, we obtain\vspace{-1mm}
    \begin{equation}\label{eq:modelling_2}
    0 - v_{o1}(t) = \left(L \diff{}{t} + R_{f1}\right) i_{\text{in}}(t).
    \end{equation}\vspace{-2mm}

    \noindent
    We then apply the \scriptsize$\left(R_{\text{in}}C\diff{}{t} + 1\right)$ \normalsize operator to both sides of~\eqref{eq:modelling_2} and, since said operator commutes with \scriptsize$\left(L \diff{}{t} + R_{f1}\right)$\normalsize, we get\vspace{-1mm}
    \begin{equation*}
         \resizebox{\columnwidth}{!}{{$- \left(R_{\text{in}}C\diff{}{t} + 1\right)v_{o1}(t) = \left(L \diff{}{t} + R_{f1}\right) \left(R_{\text{in}}C\diff{}{t} + 1\right)i_{\text{in}}(t).$}}
    \end{equation*}\vspace{-3mm}

    \noindent
    Multiplying~\eqref{eq:modelling_1} by $C$ and then substituting it into the equality shown above yields the identity
    \[
    -\left(R_{\text{in}}C\diff{}{t} + 1\right)v_{o1}(t) = \left(L \diff{}{t} + R_{f1}\right)C\diff{u(t)}{t},
    \]\vspace{-3mm}

    \noindent
    which is equivalent to
    \[
    \left(R_{\text{in}}C\diff{}{t} + 1\right)v_{o1}(t) = -\left(LC \diff[2]{}{t} + R_{f1}C\diff{}{t}\right)u(t).
    \]\vspace{-3mm}

    \noindent
    Applying the Laplace transform under zero initial conditions yields the following transfer function from $u(t)$ to $v_{o1}(t)$
    \begin{equation}\label{eq:G_diff}
        \mathbf{G}_{\text{diff}}(s) = \dfrac{-LCs^2 - R_{f1}Cs}{R_{\text{in}}Cs + 1} =\dfrac{-s(LCs + R_{f1}C)}{R_{\text{in}}Cs + 1},
    \end{equation}\vspace{-3mm}

    \noindent
    which characterises the dynamics of the op-amp shown in Figure~\ref{fig:input_stage}. Although the obtained transfer function does have a pole at $\{\infty\}$, this also comes with a notable drawback.
    
    Comparing the expression in \eqref{eq:G_diff} with the one in~\eqref{eq:desired_tf}, it is immediately obvious that $\mathbf{G}_{\text{diff}}(s)$ has a zero {in $\{0\}$}. Thus, cascading three such systems yields a model with three zeros {in $\{0\}$}, which {does not align with our stated aims}. To avoid this fact, we design an additional output stage for our circuits, by using $u(t)$ and $v_{o1}(t)$ as inputs for a standard proportional op-amp circuit, shown in Figure~\ref{fig:g1_unstable} at the top of the page, with the resistor $R_{f2}$ placed in the feedback loop. 

    \begin{figure*}
        \begin{equation}\label{eq:g_cen}\tag{21}
            \textstyle
            \mathbf{G}(s) = \frac{(s-16)(s+0.4)(s+0.2)^2(s+4)^2}{(s+0.3)^3} =\left[ \tiny \begin{array}{rrrrrr|r}
                -0.5490 - s & 0.0163 & 0.8899 & -0.0477 & 0 & 0 & -116.7948\\
                -0.2666 & -0.0862-s & -0.0974 & 0 & 0 & 0 & -306.3973\\
                -0.0559 & 0.0318 & -0.2546-s & 0.1200 & 0 & 0 & -38.2123\\
                0 & 0 & 0 & -s & 4.0000 & 0 & -0.4000\\
                0 & 0 & 0 & 0 & -s & -4.0000 & 0.0200\\
                0.0238 & 0.0025 & -0.0751 & -0.9924 & 0 & 0 & 0.2481\\
                \hline
                -0.0281 & -0.0118 & 0.1436 & 1.2466 & 0.3038 & 0.1688 & 0
            \end{array}\right]_{-0.2}
            \normalsize.\hspace{-1mm}
        \end{equation}
        \hrulefill\vspace{-3mm}
    \end{figure*}

    Connecting both $u(t)$ and $v_{o1}(t)$ to the negative terminal of the op-amp, as in Figure~\ref{fig:g1_unstable}, yields the following input-output relation in the time domain\vspace{-1mm}
    \[
    y_1(t) = -\frac{R_{f2}}{R_m}(v_{o1}(t) + u(t)),\vspace{-1mm}
    \]
    which, after transitioning to the frequency domain as in~\eqref{eq:G_diff} and substituting $\mathbf V_{o1}(s)$ with $\mathbf G_{\text{diff}}(s)\mathbf U(s)$, produces
    \begin{equation*}
        \mathbf G_1(s) = \dfrac{R_{f2}LCs^2 + R_{f2}C(R_{f1} - R_{\text{in}})s - R_{f2}}{R_mR_{\text{in}}Cs + R_m},
    \end{equation*}
    which represents the transfer function of the circuit obtained by interconnecting the op-amps shown in Figures~\ref{fig:input_stage} and~\ref{fig:g1_unstable}.\newpage
    
    By employing now Vieta's formulas for the numerator, it is straightforward to check that $\mathbf{G}_1(s)$ is guaranteed to have a zero somewhere in $\{s\in\mathbb{C}\,\vert\,\mathrm{Re}(s) > 0\}$, with its counterpart located in $\{s\in\mathbb{C}\,\vert\,\mathrm{Re}(s) < 0\}$, for any positive values of the electrical components. Since we do not wish for any other zeros in the right half-plane or on the imaginary axis of the complex plane, we now proceed to design a similar output stage in which we connect $v_{o1}(t)$ to the negative terminal of the op-amp and $u(t)$ to the positive terminal, as depicted in Figure~\ref{fig:g2_stable} at the top of the previous page, thereby resulting in the following input-output relation\vspace{-1mm}
    \[
    {y_2(t) = -\frac{R_{f2}}{R_m}v_{o1}(t) + \left(1+\frac{R_{f2}}{R_m}\right)u(t).}\vspace{-1mm}
    \]
    In the frequency domain, this yields the transfer function
    \begin{equation*}
        {\mathbf{G}_2(s) \hspace{-0.5mm}=\hspace{-2mm} \large\begin{array}{l}
            \tfrac{R_{f2}LCs^2 + C(R_{f2}R_{f1} + R_{\text{in}}R_m + R_{\text{in}}R_{f2})s + (R_m + R_{f2})}{R_mR_{\text{in}}Cs + R_m}
        \end{array}\hspace{-1.5mm},\normalsize}
    \end{equation*}
    which represents the transfer function of the circuit obtained by interconnecting the op-amps shown in Figures~\ref{fig:input_stage} and~\ref{fig:g2_stable}, and for which all zeros are guaranteed to be located in the open left-half complex plane, as per the Hurwitz criterion.

    We now choose a set of convenient (although not typically encountered in practice) values for the electrical components of these circuits, in order to obtain a model with a desirable finite pole-zero structure. Hence, for $\mathbf{G}_1(s)$, we consider the values $R_{\text{in}} = \frac{40}{3} \, \Omega$, $R_{f1} = \frac{43}{12} \, \Omega$, $R_{f2} = \frac{1}{10} \, \Omega$, $R_{m} = 1 \, \Omega$, $C = 250 \, \text{mF}$, and $L = \frac{5}{8} \, \text{H}$. For $\mathbf{G}_2(s)$, we proceed to take $R_{\text{in}} = \frac{10}{3} \, \Omega$, $R_{f1} = \frac{23}{6} \, \Omega$, $R_{f2} = 1 \, \Omega$, $R_{m} = 1 \, \Omega$, $C = 1000 \, \text{mF}$, and $L = \frac{5}{2} \, \text{H}$. Therefore, the final model to be used in the sequel will be given by
    \begin{equation*}
        \mathbf{G}(s) = \mathbf{G}_1(s)\mathbf{G}_2(s)\mathbf{G}_2(s),
    \end{equation*}
    which, for the physical values specified above, is described by the transfer function in \eqref{eq:g_cen}\stepcounter{equation}, at the top of this page, and by the minimal realisation centred in $\lambda_0=-0.2$ given therein.

    \subsection{Moment Matching with Pole-Zero Constraints}\label{subsec:reduce}
	
	We are interested in obtaining a reduced-order model that preserves all the poles and zeros of $\mathbf{G}(s)$ which are located in the set $\{s\in\mathbb{C}\,\vert\,\mathrm{Re}(s)\geq 0\}\cup\{\infty\}$, as knowledge pertaining to these structural elements is essential for the stabilisation and control of a plant in closed-loop configuration (see, for example, Chapter~5 in \cite{zhou}). Since $\mathbf{G}(s)$ has a triple pole at $\{\infty\}$ along with a zero in $s_z=16$, we will set $\nu=4$ and will preserve all of the aforementioned structural elements in the reduced-order model. Moreover, given the frequency characteristic of $\mathbf{G}(s)$ as a differentiator, in which the high-frequency gain grows unboundedly due to the poles at $\{\infty\}$, practical considerations motivate us to preserve the filter's profile at low frequencies. Thus, we proceed to match the $0^\text{th}$-order moments of $\mathbf{G}(s)$ at $\mu_i\in\{-0.1j,0.1j,-j,j\}$, and we select $p_k\in\{\lambda_0+5\cdot10^4,\lambda_0+10^5,\infty\}$ along with $z_\ell=16$.\vspace{-1.5mm}
	
	\begin{remark}\label{rem:approx}
		Although the poles preserved in the reduced-order model must be distinct (recall Section~\ref{subsec:prob_st}), notice that $\mathfrak{f}(p_k)\in\{0,10^{-5},2\cdot10^{-5}\}$, thus enabling a round-off approximation to a triple pole in $s=0$ for $\mathbf{H}(\mathfrak{f}(s))$. The effects of this approximation will be addressed in the sequel.\vspace{-1.5mm}
	\end{remark}
	
	We therefore take $L=\begin{bmatrix}
		1 & 1 & 1 & 1
	\end{bmatrix}$ along with \vspace{-1.5mm}
	\begin{equation*}
		S=\mathrm{diag}\left(\begin{bmatrix} \phantom{-}0.1923 & 0.9615 \\ -0.9615 & 0.1923 \end{bmatrix}, \begin{bmatrix} \phantom{-}4 & 2 \\ -2 & 4 \end{bmatrix}\right),\vspace{-1.5mm}
	\end{equation*}
	in Theorem~\ref{thm:class}, so as to obtain the solution of \eqref{eq:gen_sylv}, namely\vspace{-1mm}
    \begin{equation*}
        \Pi=\left[\tiny\begin{array}{rrrrrr}
            116.0611 & 305.9632  & 37.2368  &  0.4276  &  0.0574  &  0.0007\\
            114.5322 & 306.7256  & 39.6616  &  0.7095  & -0.0269  &  0.0183\\
            118.2258 & 309.8884  & 41.3876  &  0.4002  & -0.0200  & -0.0010\\
            119.8748 & 296.4812  & 42.1431  &  0.3989  & -0.0200  & -0.0010
        \end{array}\right]^\top\hspace{-0.5mm},\vspace{-1mm}
    \end{equation*}
    while setting $S_z=\mathfrak{f}(16)$ alongside $R_z=1$ and choosing $\widehat{\Pi}$ as the Moore-Penrose pseudoinverse of $\Pi$ in Theorem~\ref{thm:zeros}. These values generate the gain matrix\vspace{-1.5mm}
	\begin{equation*}
		G=\begin{bmatrix}
			-0.1317 & -0.1344 & -1.5149 & 15.0497
		\end{bmatrix}^\top,\vspace{-1.5mm}
	\end{equation*}
	which produces the reduced-order model\vspace{-1mm}
	\begin{equation*}
		\mathbf{H}(\mathfrak{f}(s)) =\hspace{-1mm} \scriptsize\begin{array}{l}
	       \dfrac{ -3.802 s^3 - 0.9435 s^2 - 0.5925 s + 0.0411}{ s^4 + 4.884 s^3 - 1.3\cdot10^{-4} s^2 + 2\cdot 10^{-10} s - 2\cdot 10^{-15}}\ ,
		\end{array}\normalsize\vspace{-1mm}
	\end{equation*}
	that we approximate (recall Remark~\ref{rem:approx}) into\vspace{-1mm}
	\begin{equation}\label{eq:approx_init}
		\hspace{-2mm}\widehat{\mathbf{H}}(\mathfrak{f}(s)) =
		\frac{ -3.802 s^3 - 0.9435 s^2 - 0.5925 s + 0.0411}{ s^4 + 4.884 s^3},\vspace{-1mm}
	\end{equation}
	yielding the fourth-order approximant of $\mathbf{G}(s)$ given by\vspace{-1mm}
	\begin{equation}\label{eq:approx_final}
		\hspace{-1mm}\widehat{\mathbf{H}}(s) =\hspace{-1mm} \small\begin{array}{l}
			\dfrac{ 0.0084(s-16)(s+0.2)(s^2+2.171s+6.109)}{s + 0.4047}
		\end{array}\hspace{-1mm}.\normalsize\vspace{-1mm}
	\end{equation}
	Not only does the transfer function displayed in \eqref{eq:approx_final} have the desired pole-zero structure, but the approximation undertaken in \eqref{eq:approx_init} only marginally degrades the performed moment matching at $\pm0.1j$ and $\pm j$, and it is simple to check that\vspace{-1mm}
	\begin{equation*}
        \resizebox{\columnwidth}{!}{$
		\begin{array}{rcl}
			\left|\Big|\eta_{\,\mathbf{G},0}(\pm0.1j)\Big|-\left|\eta_{\,\widehat{\mathbf{H}},0}(\pm0.1j)\right|\right|\hspace{-2.5mm}&\leq&\hspace{-2.5mm}3\cdot 10^{-6}\cdot\left|\eta_{\,\mathbf{G},0}(\pm0.1j)\right|,\\\vspace{-3mm}\\
			\left|\Big|\eta_{\,\mathbf{G},0}(\pm j)\Big|-\left|\eta_{\,\widehat{\mathbf{H}},0}(\pm j)\right|\right|\hspace{-2.5mm}&\leq&\hspace{-2.5mm}3\cdot 10^{-7}\cdot\left|\eta_{\,\mathbf{G},0}(\pm j)\right|,\\\vspace{-3mm}\\
			\left|\mathrm{arg}(\eta_{\,\mathbf{G},0}(\pm0.1j))-\mathrm{arg}(\eta_{\,\widehat{\mathbf{H}},0}(\pm0.1j))\right|\hspace{-2.5mm}&\leq&\hspace{-2.5mm}8\cdot 10^{-7}\cdot\left|\mathrm{arg}(\eta_{\,\mathbf{G},0}(\pm0.1j))\right|,\\\vspace{-3mm}\\
			\left|\mathrm{arg}(\eta_{\,\mathbf{G},0}(\pm j))-\mathrm{arg}(\eta_{\,\widehat{\mathbf{H}},0}(\pm j))\right|\hspace{-2.5mm}&\leq&\hspace{-2.5mm}2\cdot 10^{-5}\cdot\left|\mathrm{arg}(\eta_{\,\mathbf{G},0}(\pm j))\right|,
		\end{array}$}\vspace{-1mm}
	\end{equation*}
	thus confirming the practical validity of said approximation.\vspace{-1mm}
	
	\section{Conclusion}\label{sec:outro}

    By using M\"obius mappings in conjunction with centred realisations, the problem of model order reduction with moment matching for descriptor systems can be elegantly solved by adapting standard, state-space-based theory. Additionally, due to the properties of said reversible mappings, imposing pole-zero constraints on the reduced-order model amounts to a mere remapping of the (extended) complex plane. Finally, given that this approach has shown promising results in the single-input single-output case, future research efforts on this topic can focus on the multiple-input multiple-output case.\vspace{-1mm}

    \begin{figure*}
    \begin{equation}\label{eq:cei_faa_asta}\tag{29}
            \small\begin{array}{c}
                \mathbf{G}^{(j)}(\mu_i)=\displaystyle\sum_{q=1}^j \mathbf{\widetilde{G}}^{(q)}(\mathfrak{f}(\mu_i)) \cdot \mathcal{B}_{j,q}\left(\mathfrak{f}'(\mu_i), \dots, \mathfrak{f}^{(j-q+1)}(\mu_i)\right),\ \mathbf{H}^{(j)}(\mu_i)=\displaystyle\sum_{q=1}^j \mathbf{\widehat{G}}^{(q)}(\mathfrak{f}(\mu_i)) \cdot \mathcal{B}_{j,q}\left(\mathfrak{f}'(\mu_i), \dots, \mathfrak{f}^{(j-q+1)}(\mu_i)\right).
            \end{array}\normalsize\vspace{-1mm}\hspace{-2mm}
        \end{equation}
        \hrulefill\vspace{-6mm}
    \end{figure*}

	\section*{Appendix}
		
		\noindent\textbf{Proof of Theorem~\ref{thm:class}
		} 
		
		\noindent
		To prove point a), we first show that the Sylvester equation\vspace{-1.5mm}
		\begin{equation}\label{eq:sylv_aux}
			(A-\lambda_0 E)^{-1}E\Pi+(A-\lambda_0 E)^{-1}BL=\Pi S\vspace{-1.5mm}
		\end{equation}
		has a unique solution, which is equivalent to $(A-\lambda_0 E)^{-1}E$ and $S$  not having any common eigenvalues. Since $S$ is similar to the matrix $J$, given in the statement, the eigenvalues of $S$ are precisely $\mathfrak{f}(\mu_i)$. Therefore, we need only check that\vspace{-1.5mm}
		\begin{equation}\label{eq:no_comm}
			\det\left(\mathfrak{f}(\mu_i)(A-\lambda_0 E)-E\right)\neq 0,\,\forall\,i\in\{1:\alpha\}.\vspace{-1.5mm}
		\end{equation}
		We select a generic $i\in\{1:\alpha\}$ and we begin by treating the case $\mathfrak{f}(\mu_i)=0$, which is equivalent to $\{\mu_i\}=\{\infty\}$. Since $\mu_i\not\in\Lambda(A-\lambda E)$ (recall the Problem Statement given in Section~\ref{subsec:prob_st}), we get the fact that $\det (E)\neq0$, since the pencil $A-\lambda E$ can have no infinite generalised eigenvalues, from which \eqref{eq:no_comm} follows. If we have $\mathfrak{f}(\mu_i)\neq0$, then we have $
		\det(\mathfrak{f}(\mu_i)(A-\lambda_0 E)-E)=\frac{1}{(\mu_i-\lambda_0)^n}\det(A-\mu_i E)$. Since now $\mu_i\in\mathbb{C}\setminus\{\lambda_0\}$ along with $\mu_i\not\in\Lambda(A-\lambda E)$ (recall, once again, Section~\ref{subsec:prob_st}), then $\Pi$ is the unique solution of \eqref{eq:sylv_aux}. Moreover, due to the nonsingularity of the matrix $A-\lambda_0 E$, it follows that, by left-multiplying with $A-\lambda_0 E$ in \eqref{eq:sylv_aux} and with its inverse in \eqref{eq:gen_sylv}, $\Pi$ is a solution of \eqref{eq:sylv_aux} if and only if it solves \eqref{eq:gen_sylv}. Consequently, its uniqueness as a solution of \eqref{eq:gen_sylv} follows from it being the unique solution of \eqref{eq:sylv_aux}.
		
		The proof of point b) is more involved, and can be broken down into 4 parts: \textbf{I)} we first show that there exists a transfer function $\widehat{\mathbf{G}}(\lambda)$ that matches the moments of $\widetilde{\mathbf{G}}(\lambda)$ (recall point b) of Lemma~\ref{lem:remap}) at each $\mathfrak{f}(\mu_i)$, \textbf{II)} we then prove that $\mathbf{H}(\lambda)$ from \eqref{eq:has} is none other than $\widehat{\mathbf{G}}(\mathfrak{f}(\lambda))$, \textbf{III)} we establish that its moments are well-defined at each point $\mu_i$, and \textbf{IV)} we explain how objective A) is a consequence of \textbf{I)} and \textbf{II)}.
		
		\textbf{I)} Denoting $\widetilde{A}:=(A-\lambda_0E)^{-1}E$, $\widetilde{B}:=(A-\lambda_0E)^{-1}B$, and recalling the Sylvester equation given in \eqref{eq:sylv_aux}, it follows from equations (4) and (5) in \cite{PZ_MM} that the state-space system $(S-GL,G,C\Pi,O)$, which realises the transfer function\vspace{-2mm}
		\begin{equation}\label{eq:G_hat}
			\widehat{\mathbf{G}}(\lambda)-D = C\Pi(\lambda I_{\nu}-S+GL)^{-1}G,\vspace{-2mm}
		\end{equation}
		achieves moment matching for $\widetilde{\mathbf{G}}(\lambda)-D=C(\lambda I_\nu-\widetilde{A})^{-1}\widetilde{B}$ as per the identities $\eta_{\,\widetilde{\mathbf{G}}-D,j}(\mathfrak{f}({\mu_i})) = \eta_{\,\widehat{\mathbf{G}}-D,j}(\mathfrak{f}({\mu_i}))$, for all $i \in \{1:\alpha \}$ and all $j \in \{0:n_i-1\}$,
		so long as \eqref{eq:not_pole} holds.
		
		Recalling Definition~\ref{def:mom} and adding back the $D$ matrix to both of the aforementioned transfer functions, it follows that\vspace{-2mm}
		\begin{equation}\label{eq:match_tran}
			\resizebox{.9\columnwidth}{!}{$
            \eta_{\,\widetilde{\mathbf{G}},j}(\mathfrak{f}({\mu_i})) = \eta_{\,\widehat{\mathbf{G}},j}(\mathfrak{f}({\mu_i})),\,\forall\, i \in \{1:\alpha \},\, j \in \{0:n_i-1\},
            $}\hspace{-1mm}\vspace{-1mm}
		\end{equation}
		by direct differentiation with respect to a sum of terms, since the $D$ matrix is a constant with respect to the $\lambda$ variable.
		
		\textbf{II)} Let $\mathbf{H}(\lambda):=\widehat{\mathbf{G}}(\mathfrak{f}(\lambda))$ and effect the change of variable $\lambda\mapsto\mathfrak{f}(\lambda)$ in \eqref{eq:G_hat} to get that\vspace{-3mm}
		\begin{equation*}
			\small\begin{array}{rcl}
			     \mathbf{H}(\lambda) &\hspace{-2mm}=\hspace{-2mm}& C \Pi \left( \tfrac{1}{\lambda - \lambda_0} I_{\nu} - S + GL \right)^{-1}G+D\\
			   &\hspace{-2mm}=\hspace{-2mm}&C\Pi \left(-I_{\nu} + (\lambda_0 - \lambda)(-S+GL)\right)^{-1}G(\lambda_0 - \lambda)+D.\vspace{-1mm}
			\end{array}
		\end{equation*}
		\noindent Rearranging terms inside the bracket that is being inverted, one directly retrieves \eqref{eq:has}. Note also that the pole pencil of $\mathbf{H}(\lambda)$ given in \eqref{eq:has} satisfies $\det(I_{\nu}+(\lambda_0-\lambda)(S-GL))\not\equiv0$, since $\det(I_{\nu}+(\lambda_0-\lambda)(S-GL))\big\vert_{\lambda=\lambda_0}=1$, which makes the transfer function $\mathbf{H}(\lambda)$ well-defined for all $G$ in \eqref{eq:has}.
		
		\textbf{III)} In order to show that $\eta_{\,\mathbf{H},j}(\mu_i)$ is well-defined for all $i\in\{1:\alpha\}$ and all $j\in\{0:n_i-1\}$, it suffices to prove that $\mu_i\not\in\Lambda(I_{\nu} + \lambda_0(S-GL)-\lambda(S-GL))$ for each $i\in\{1:\alpha\}$. This would imply that no $\mu_i$ is a pole of $\mathbf{H}(\lambda)$, thus making said rational function analytical at each $\mu_i$ and, consequently, making the derivatives that go into $\eta_{\,\mathbf{H},j}(\mu_i)$ be well-defined.
		
		To this end, select any $i\in\{1:\alpha\}$, recall \eqref{eq:not_pole} and consider first the case $\mathfrak{f}(\mu_i)=0$, which is equivalent to $\{\mu_i\}=\{\infty\}$. By \eqref{eq:not_pole}, it follows that $\det(S-GL)\neq 0$, which implies that the pole pencil from \eqref{eq:has} cannot have infinite generalised eigenvalues. If we have $\mathfrak{f}(\mu_i)\neq 0$, then \eqref{eq:not_pole} ensures that\vspace{-2mm}
		\begin{multline}\label{eq:not_eig}
			\tfrac{1}{(\mu_i-\lambda_0)^\nu}\det(I_\nu-(\mu_i-\lambda_0)(S-GL))=\\=\det\left(\tfrac{1}{\mu_i-\lambda_0}I_\nu-S+GL\right)\neq 0.
		\end{multline}
		\vspace{-5mm}
		
		\noindent Since now $\mu_i\in\mathbb{C}\setminus\{\lambda_0\}$, we multiply \eqref{eq:not_eig} by $(\mu_i-\lambda_0)^\nu$ to once again get that $\mu_i\not\in\Lambda(I_{\nu} + \lambda_0(S-GL)-\lambda(S-GL))$.
		
		\textbf{IV)} The fact that objective A) holds for $j=0$ and for all $i\in\{1:\alpha\}$ follows directly from \eqref{eq:match_tran}, since\vspace{-1mm}
		\begin{equation*}
			\resizebox{\columnwidth}{!}{
            $\eta_{\,\mathbf{G},0}(\mu_i)=\mathbf{G}(\mu_i)=\widetilde{\mathbf{G}}(\mathfrak{f}(\mu_i))=\widehat{\mathbf{G}}(\mathfrak{f}(\mu_i))=\mathbf{H}(\mu_i)=\eta_{\,\mathbf{H},0}(\mu_i),$
            }
			\vspace{-1mm}
		\end{equation*}
		for all $i\in\{1:\alpha\}$. For those cases where $j>0$, it suffices to show that $\mathbf{G}^{(j)}(\mu_i)=\mathbf{H}^{(j)}(\mu_i),\,\forall\,i\in\{1:\alpha\}$. Recalling that $\mathbf{G}(\lambda)=\widetilde{\mathbf{G}}(\mathfrak{f}(\lambda))$ and $\mathbf{H}(\lambda)=\widehat{\mathbf{G}}(\mathfrak{f}(\lambda))$, we select an arbitrary $i\in\{1:\alpha\}$ and we employ \emph{Fa\`a di Bruno's Formula} for higher-order differentiation to obtain the identities given in \eqref{eq:cei_faa_asta}\stepcounter{equation}, which is located at the top of this page, where $\mathcal{B}_{j, q}(\cdot)$ is the \emph{partial Bell polynomial} of order $j$ and degree $q$ (see Section~2 of \cite{johnson_fa_di_bruno} for an overview of Bell polynomials and Fa\`a di Bruno's Formula). Recall that \eqref{eq:match_tran} holds, implying that $\mathbf{\widetilde{G}}^{(q)}(\mathfrak{f}(\mu_i))=\mathbf{\widehat{G}}^{(q)}(\mathfrak{f}(\mu_i))$ for all $i\in\{1:\alpha\}$ and all $q\in\{0:n_i-1\}$. Since $j\leq n_i-1$ for all $i\in\{1:\alpha\}$, we get that all the terms of the two sums in \eqref{eq:cei_faa_asta} are pairwise equal. Thus, so are $\mathbf{G}^{(j)}(\mu_i)$ and $\mathbf{H}^{(j)}(\mu_i)$, for all $i\in\{1:\alpha\}$ and $j\in\{0:n_i-1\}$, from which the desired result follows.\qed
		
	\smallskip
	
	\noindent
	\textbf{Proof of Proposition~\ref{prop:poles}}
	
	\noindent
	To prove this statement, we begin by defining the matrices\vspace{-1mm}
	\begin{equation*}
		M_k := \scriptsize\begin{bmatrix}
			1 & L\\
			G & -D_k
		\end{bmatrix}, \,\forall\,k\in\{1:\beta\}.\vspace{-1mm}\normalsize
	\end{equation*}
	Since all $p_k$ and $\mu_i$ are pairwise distinct (recall Section~\ref{subsec:prob_st}) and since $\mathfrak{f}(\cdot)$ is invertible (recall Lemma~\ref{lem:inv}), making it bijective, it follows that $\mathfrak{f}(p_k)$ and $\mathfrak{f}(\mu_i)$ are also pairwise distinct. Thus, the square matrices $D_k$ defined in the result's statement have no zero eigenvalues, making them all invertible. In light of this fact, we now apply the \emph{Schur determinant formula} (see, for example, Section~0.8.5 of \cite{matrix_analysis}) to get that \vspace{-2mm}
	\begin{multline}\label{eq:det}
		\det M_k = \det 1 \cdot \det \left(1-L(-D_k)^{-1}G\right)= \\= \det (-D_k) \cdot \det \left(-D_k - G (1)^{-1}L\right),\vspace{-1mm}
	\end{multline}
	\vspace{-5mm}
	
	\noindent for all $k\in\{1:\beta\}$. By recalling that $D_k^{-1}=\mathcal R_k+j\mathcal I_k$ (with $j:=\sqrt{-1}$) and by multiplying the identities from \eqref{eq:pole_cond} to the left with $\begin{bmatrix}
		1 & j
	\end{bmatrix}$, we get that $\det (1-L(-D_k)^{-1}G) = 0$, which implies that $\det M_k = 0,$ for all $k\in\{1:\beta\}$, and consequently, the final term from \eqref{eq:det} must also be zero. Since $D_k$ is invertible ($\det (-D_k) \neq 0$), it follows that \vspace{-1mm}
	\begin{equation*}
		\det \left(-D_k - G(1)^{-1}L\right) = \det \left(S-GL-\mathfrak{f}(p_k) I_\nu\right) = 0.
	\end{equation*}
	\vspace{-6mm}
	
	\noindent We must now consider two cases. If $p_k\in\mathbb{C}\setminus\{\lambda_0\}$, then\vspace{-2mm}
	\begin{multline}\label{eq:fin_pole}
		0 = \det \left(\tfrac{1}{p_k-\lambda_0} I_\nu-(S-GL)\right)=\\=\tfrac{1}{(p_k-\lambda_0)^\nu}\det(I_\nu-(p_k-\lambda_0)(S-GL)),
	\end{multline}
	\vspace{-5mm}
	
	\noindent and, since $\frac{1}{(p_k-\lambda_0)}\in\mathbb{C}\setminus\{0\}$, the determinant from the rightmost term in \eqref{eq:fin_pole} must be $0$. Recalling that the determinant of the pole pencil from \eqref{eq:has} is not identically 0, the desired conclusion follows. If $\{p_k\}=\{\infty\}$, then we get $\mathfrak{f}(p_k)=0$ and $\det(S-GL)=0$. Once again, due to the determinant of the pole pencil from \eqref{eq:has} not being identically 0, said pencil must have an infinite generalised eigenvalue. \qed\smallskip
	
	\noindent
	\textbf{Proof of Theorem~\ref{thm:zeros}} 
	
	\noindent
	To prove point a), we first show that $\widehat{\Pi}$ does indeed exist. Recall from Section~\ref{subsec:prob_st} that $\delta(G)=n$ and, therefore, by Proposition~\ref{prop:min_css} and point c) of Lemma~\ref{lem:remap}, the state-space system $\big(\widetilde{A},\widetilde{B},C,D\big)$ is minimal, where $\widetilde{A}:=(A-\lambda_0E)^{-1}E$ along with $\widetilde{B}:=(A-\lambda_0E)^{-1}B$. Combining this minimality with the fact that the pair $(L, S)$ is observable and that $\Pi$ is also the solution of \eqref{eq:sylv_aux}, it follows (see Section~2 of \cite{PZ_MM}) that $\Pi$ has full column rank and a left-inverse denoted $\widehat{\Pi}$.
	
	Moving on, we show that the Sylvester equation, highlighted in \eqref{eq:sylv_z} and written in equivalent form as\vspace{-1.5mm}
	\begin{equation}\label{eq:sylv_alt}
		S_z\widetilde{\Upsilon}_z=\widetilde{\Upsilon}_z \widetilde{A}+R_z\big(C+DL\widehat{\Pi}\big),\vspace{-1.5mm}
	\end{equation}
    having defined $\widetilde{\Upsilon}_z:=\Upsilon_z(A-\lambda_0E)$, has a unique solution. This fact is equivalent to $\widetilde{A}$ and $S_z$ not having any common eigenvalues, with the subsequent proof being analogous, \emph{mutatis mutandis}, to the one given for point a) of Theorem~\ref{thm:class}. Therefore, we hereby omit it for the sake of brevity.
	
		To prove point b), we first proceed to retrieve the fact that\vspace{-1mm}
		\begin{equation}\label{eq:vpg_1}
			\det \scriptsize\begin{bmatrix}
				S-GL- \mathfrak{f}(z_\ell)I_{\nu}& G\\
				C\Pi & D
			\end{bmatrix} = 0, \,\forall\,\ell \in\{1:\gamma\},\normalsize\vspace{-1mm}
		\end{equation}
        and we treat the two cases from \eqref{eq:zero_cond} separately, to show that objective C) from Section~\ref{subsec:prob_st} is satisfied in both instances. To begin, we point out that $S_z\widetilde{\Upsilon}_z=\widetilde{\Upsilon}_z\widetilde{A} +R_z\big(C+DL\widehat{\Pi}\big)$ has a unique solution, with $\mathrm{rank}\big(\widetilde{\Upsilon}_z\Pi\big) = \gamma$ and $\widetilde{\Upsilon}_z\widetilde B=O$, due to the assumptions from the statement of point b).
	
	{
		Since $\mathfrak{f}(z_\ell) \in \Lambda(S_z-\lambda I_\gamma)$, there exists $w \in \mathbb{C}^\gamma, \, w \neq O$,\newline such that $w^\top(S_z-\mathfrak{f}(z_\ell)I_\gamma ) = O$. Right-multiplying this identity with $\widetilde{\Upsilon}_z\Pi$ yields $w^\top(S_z\widetilde{\Upsilon}_z\Pi-\mathfrak{f}(z_\ell)\widetilde{\Upsilon}_z\Pi) = O$, and substituting $S_z\widetilde{\Upsilon}_z$ with the expression in~\eqref{eq:sylv_alt}, we obtain \vspace{-1.5mm}
		\begin{equation}\label{eq:helper_dem_z}
			\hspace{-1mm}w^\top\big( \widetilde{\Upsilon}_z\widetilde{A}\Pi + R_z \big(C+DL\widehat{\Pi}\big)\Pi - \mathfrak{f}(z_\ell)\widetilde{\Upsilon}_z\Pi\big) = O.\vspace{-1.5mm}
		\end{equation}
        Recalling the fact that $\widetilde{\Upsilon}_z := \Upsilon_z(A-\lambda_0E)$ and then left-multiplying in~\eqref{eq:gen_sylv} with $\widetilde{\Upsilon}_z(A-\lambda_0E)^{-1}$ results in \vspace{-1.5mm}
		\begin{equation*}
			\widetilde{\Upsilon}_z\widetilde{A} \Pi + \widetilde{\Upsilon}_z\widetilde BL = \widetilde{\Upsilon}_z\Pi S.\vspace{-1.5mm}
		\end{equation*}
		Substituting this identity into~\eqref{eq:helper_dem_z} and employing the fact that $\widetilde{\Upsilon}_z\widetilde B = O$, as indicated in the result's statement, yields\vspace{-1mm}
		\begin{equation*}
			w^\top\big(\widetilde{\Upsilon}_z\Pi(S - \mathfrak{f}(z_\ell)I_\nu) + R_z \big(C+DL\widehat{\Pi}\big)\Pi\big) = O,\vspace{-1mm}
		\end{equation*}
        which, in matrix form, is equivalent to \vspace{-1.5mm}
		\begin{equation*}
			\scriptsize\begin{bmatrix}
				w^\top \widetilde{\Upsilon}_z\Pi & w^\top R_z
			\end{bmatrix}
			\scriptsize\begin{bmatrix}
				S - \mathfrak{f}(z_\ell)I_\nu \\ C\Pi+DL
			\end{bmatrix} = O,\vspace{-1.5mm}
		\end{equation*}
		and which we further extend into the following identity \vspace{-1mm}
		\begin{equation}\label{eq:left_null}
			\hspace{-1mm}\resizebox{.9\columnwidth}{!}{
            $\begin{bmatrix}
				w^\top \widetilde{\Upsilon}_z\Pi & w^\top R_z
			\end{bmatrix}
			\begin{bmatrix}
				S - \mathfrak{f}(z_\ell)I_\nu & G \\ C\Pi+DL & D
			\end{bmatrix} = \begin{bmatrix}O &  \widetilde{\Upsilon}_z\Pi G + R_zD \end{bmatrix}.$
            }
		\end{equation}
		\vspace{-5mm}
		
		Recall now, for the result's statement, that $\widetilde{\Upsilon}_z\Pi$ has full row rank and that $w\neq O$, from which it follows that both $w^\top\widetilde{\Upsilon}_z\Pi$ and the row vector from the left-hand side of \eqref{eq:left_null} are non-zero. Moreover, since we have $\widetilde{\Upsilon}_z\Pi G + R_zD = O$ holds due to \eqref{eq:suff_sing}, we must also have that the matrix from the left-hand side of \eqref{eq:left_null} is singular, yielding the fact that\vspace{-1mm}
		\begin{equation*}
			\resizebox{\columnwidth}{!}{
            $\det \begin{bmatrix}
				S -GL - \mathfrak{f}(z_\ell)I_\nu & G \\ C\Pi & D
			\end{bmatrix}=\det\begin{bmatrix}
				S - \mathfrak{f}(z_\ell)I_\nu & G \\ C\Pi+DL & D
			\end{bmatrix}\det\begin{bmatrix}
				I_\nu & O\\-L & 1
			\end{bmatrix}=0,$
            }\vspace{-1mm}
		\end{equation*}
        from which we retrieve~\eqref{eq:vpg_1}. We now show that the implications in \eqref{eq:zero_cond} hold, starting with the one on the left-hand side.

	{ 
		Define $Z_\ell:=\mathrm{diag}((\lambda_0-z_\ell)I_\nu,1)$ and employ \eqref{eq:vpg_1} to get \vspace{-5mm}
		\begin{multline}\label{eq:vpg_2}
			\det \scriptsize\begin{bmatrix}
				I_\nu + (\lambda_0-z_\ell)(S-GL)& G(\lambda_0 - z_\ell)\\
				C\Pi & D
			\end{bmatrix} =\\=\det Z_\ell\det \scriptsize\begin{bmatrix}
				S-GL- \mathfrak{f}(z_\ell)I_{\nu}& G\\
				C\Pi & D
			\end{bmatrix}= 0,\normalsize
		\end{multline}
		\vspace{-5mm}
		
		\noindent for all $\ell \in\{1:\gamma\}$ and $z_\ell\in\mathbb{C}\setminus\{\lambda_0\}$, from the determinant's distributivity over matrix multiplication and, since $\mathbf{H}(\lambda)\not\equiv0$ (recall Remark~\ref{rem:not_id_0}), we have (see Chapter~3 in \cite{zhou}) that\vspace{-1mm}
		\begin{equation}\label{eq:vpg_3}
			\det \scriptsize\begin{bmatrix}
				I_\nu + (\lambda_0-\lambda)(S-GL)& G(\lambda_0 - \lambda)\\
				C\Pi & D
			\end{bmatrix} \not\equiv 0.\normalsize
		\end{equation} 
		\vspace{-4mm}
		
		\noindent By combining \eqref{eq:vpg_2} and \eqref{eq:vpg_3}, the left-hand side of \eqref{eq:zero_cond} follows.
		
		{
			Finally, to retrieve the implication from the right-hand side of \eqref{eq:zero_cond}, we first show that if $\{z_\ell\}=\{\infty\}$, then $\begin{bmatrix}
		          S-GL & G
	        \end{bmatrix}$ must have full row rank. Recall now from Section~\ref{subsec:prob_st} that all $z_\ell$ and $\mu_i$ are pairwise distinct. Thus, due to $\mathfrak{f}(\cdot)$ being invertible (recall Lemma~\ref{lem:inv}), making it bijective, we get that all $\mathfrak{f}(z_\ell)$ and $\mathfrak{f}(\mu_i)$ are also pairwise distinct. If we have $\{z_\ell\}=\{\infty\}$, which corresponds to $\mathfrak{f}(z_\ell)=0$, it follows that $\mathfrak{f}(\mu_i)\neq0,\,\forall\,i\in\{1:\nu\}$. Thus, $S$ is invertible, which yields\vspace{-2mm}
			\begin{equation*}
				\resizebox{\columnwidth}{!}{
                $
                \nu=\mathrm{rank}\begin{bmatrix}
					S&G
				\end{bmatrix}=\mathrm{rank}\begin{bmatrix}
					S&G
				\end{bmatrix}\begin{bmatrix}
					I_\nu & O\\ -L & 1
				\end{bmatrix}=\mathrm{rank}\begin{bmatrix}
					S-GL&G
				\end{bmatrix}.
                $
                }\vspace{-1mm}
			\end{equation*}
			Recalling that \eqref{eq:vpg_3} holds, we now employ Corollary~1 from \cite{gen_ss}, which states that the right-hand implication in \eqref{eq:zero_cond} is true if and only if the following identity is satisfied\vspace{-1mm}
			\begin{equation*}
				\det\scriptsize\begin{bmatrix}
					-(S-GL) & -G \\ C\Pi & D
				\end{bmatrix}=0.\vspace{-1mm}\normalsize
			\end{equation*}
			This is indeed the case, since \eqref{eq:vpg_1} holds and $\mathfrak{f}(z_\ell)=0$.
		}\qed
	
	\bibliographystyle{ieeetr}
	\bibliography{ref}

@article{aug_sparse,
	author={Speril\u{a}, Andrei and Oar\u{a}, Cristian and Ciubotaru, Bogdan and Sab\u{a}u, {\c{S}}erban},
	journal={{IEEE Transactions on Automatic Control}}, 
	title={{Distributed Control of Descriptor Networks: A Convex Procedure for Augmented Sparsity}}, 
	year={2023},
	volume={68},
	number={12},
	pages={8067--8074},
	doi={10.1109/TAC.2023.3301949}
}

@article{BMI2LMI,
	
	author={R. {Doelman} and M. {Verhaegen}},
	
	journal={In Proc. of the 2016 European Control Conference}, 
	
	title={{Sequential convex relaxation for convex optimization with bilinear matrix equalities}}, 
	
	year={2016},
	
	pages={1946--1951},
	
}

@ARTICLE{app,
  author={Hu, Yinlong and Cheng, Changjun and Li, Jia and Chen, Michael Z. Q. and Du, Haiping},
  journal={IEEE Transactions on Industrial Electronics}, 
  title={{Design and Experimental Analysis of an Operational Amplifier Circuit-Based Mechatronic Semiactive Inerter}}, 
  year={2024},
  volume={71},
  number={4},
  pages={3915--3923},
  doi={10.1109/TIE.2023.3273263}
}

@book{gantmacher,
	title={{The Theory of Matrices}},
	author={Gantmacher, Feliks},
	year={1959},
	publisher={American Math. Soc.}
}

@book{Won,
  title={Linear Multivariable Control: A Geometric Approach},
  author={Wonham, W.M.},
  series={Applications of mathematics},
  year={1985},
  publisher={Springer New York}
}

@article{gen_ss,
author = {G. Verghese and P. {Van Dooren} and T. Kailath},
title = {Properties of the system matrix of a generalized state-space system},
journal = {International Journal of Control},
volume = {30},
number = {2},
pages = {235--243},
year = {1979},
publisher = {Taylor \& Francis},
}

@book{zhou,
	title={{Robust and Optimal Control}},
	author={Zhou, Kemin and Doyle, John and Glover, Keith},
	year={1996},
	publisher={Prentice-Hall}
}

@article{H2_ds,
  title={The optimal $\mathcal{H}_2$ controller for generalized discrete-time systems},
  author={Speril{\u{a}}, Andrei and Ciubotaru, Bogdan D and Oar{\u{a}}, Cristian},
  journal={Automatica},
  volume={133},
  number={109889},
  year={2021},
  publisher={Elsevier}
}

@article{H2_ct,
  title={{$\mathcal{H}_2$ Output Feedback Control of Differential-Algebraic Systems}},
  author={Speril{\u{a}}, Andrei and Oar{\u{a}}, Cristian and Ciubotaru, Bogdan D},
  journal={IEEE Control Systems Letters},
  volume={6},
  pages={542--547},
  year={2021},
  publisher={IEEE}
}

@book{complexvar,
	title={Complex Analysis: The Argument Principle In Analysis And Topology},
	author={Beardon, AF},
	publisher={John Wiley \& Sons},
	year={1979}
}

@article{css_orig,
  title={{Minimal Factorization of Rational Matrix Functions}},
  author={M. Rakowsi},
  journal={IEEE Trans. on Circuits and Systems I},
  year={1992},
  volume={39},
  pages={440--445},}

@book{DAE_surv,
  title={{Applications of Differential-Algebraic Equations: Examples and Benchmarks}},
  author={Campbell, Stephen and Ilchmann, Achim and Mehrmann, Volker and Reis, Timo and others},
  year={2019},
  publisher={Springer}
}

@ARTICLE{H2_MM,
  author={Necoar\u{a}, Ion and Ionescu, Tudor C.},
  journal={IEEE Trans. on Automatic Control}, 
  title={{$\mathcal{H}_2$ Model Reduction of Linear Network Systems by Moment Matching and Optimization}}, 
  year={2020},
  volume={65},
  number={12},
  pages={5328--5335},}

@ARTICLE{NL_MM,
  author={Ionescu, Tudor C. and Astolfi, Alessandro},
  journal={IEEE Trans. on Automatic Control}, 
  title={{Nonlinear Moment Matching-Based Model Order Reduction}}, 
  year={2016},
  volume={61},
  number={10},
  pages={2837--2847},}

@ARTICLE{TS_MM,
  author={Ionescu, T. C.},
  journal={IEEE Trans. on Automatic Control}, 
  title={{Two-Sided Time-Domain Moment Matching for Linear Systems}}, 
  year={2016},
  volume={61},
  number={9},
  pages={2632--2637},}

@ARTICLE{ORIG_MM,
  author={Astolfi, Alessandro},
  journal={IEEE Trans. on Automatic Control}, 
  title={{Model Reduction by Moment Matching for Linear and Nonlinear Systems}}, 
  year={2010},
  volume={55},
  number={10},
  pages={2321--2336},}

@article{PZ_MM,
	title={Model reduction with pole-zero placement and high order moment matching},
	author={Ionescu, Tudor C and Iftime, Orest V and I. Necoar\u{a}},
	journal={Automatica},
	volume={138},
	number={110140},
	year={2022},
	publisher={Elsevier}
}

@inproceedings{IO_DD_MM,
	title={Enforcing input-output behavior in data-driven moment matching},
	author={Bhattacharjee, Debraj and Moreschini, Alessio and Astolfi, Alessandro},
	booktitle={2025 IEEE 64th Conference on Decision and Control},
	pages={5806--5811},
	year={2025},
	organization={IEEE}
}

@article{KBL_MM,
	title={{Moment Matching by Kernel-Based Learning}},
	author={Moreschini, Alessio and Scandella, Matteo and Astolfi, Alessandro and Parisini, Thomas},
	journal={IEEE Trans. on Automatic Control},
	year={2025},
	publisher={IEEE}
}

@article{DAE_MM,
	title={Parameterization of all differential-algebraic moment matching interpolants},
	author={Simard, Joel D and Moreschini, Alessio and Astolfi, Alessandro},
	journal={IEEE Trans. on Automatic Control},
	volume={70},
	number={3},
	pages={1875--1882},
	year={2024},
	publisher={IEEE}
}

@inproceedings{DISC_MM,
	title={{Moment matching for linear systems in discrete-time: Towards enhanced performance}},
	author={Bhattacharjee, Debraj and Moreschini, Alessio and Astolfi, Alessandro},
	booktitle={2024 IEEE 63rd Conference on Decision and Control (CDC)},
	pages={7320--7325},
	year={2024},
	organization={IEEE}
}

@article{johnson_fa_di_bruno,
  title={{The Curious History of Fa{\`a} di Bruno's Formula}},
  author={Johnson, Warren P},
  journal={American Mathematical Monthly},
  volume={109},
  number={3},
  pages={217--234},
  year={2002},
  publisher={Taylor \& Francis}
}

@book{matrix_analysis,
  title={Matrix Analysis},
  author={Johnson, Charles R and Horn, Roger A},
  year={2012},
  edition={Second},
  publisher={Cambridge University Press Cambridge}
}

\end{document}